\newcommand{\can}{\overline{\phantom{x}}}
\newtheorem{dummy}{Dummy}
\newtheorem{lemma}[dummy]{Lemma}
\newtheorem{theorem}[dummy]{Theorem}
\newtheorem{proposition}[dummy]{Proposition}
\newtheorem{corollary}[dummy]{Corollary}
\theoremstyle{definition}
\newtheorem{definition}{Definition}
\newtheorem{example}[dummy]{Example}
\newtheorem{remark}[dummy]{Remark}
\newcommand{\ignore}[1]{}
\author{S. Pumpl\"un}
\email{susanne.pumpluen@nottingham.ac.uk}
\address{School of Mathematical Sciences\\
University of Nottingham\\
University Park\\
Nottingham NG7 2RD\\
United Kingdom
}
\keywords{skew polynomial ring, Ore polynomials, nonassociative algebra, commutative finite chain ring,
generalized Galois rings, linear codes,
$(f,\sigma,\delta)$-codes, skew-constacyclic codes.}
\subjclass[2010]{Primary: 17A60; Secondary: 94B05}
\begin{document}

\title[Finite nonassociative algebras obtained from skew polynomials]
{Finite nonassociative algebras obtained from skew polynomials  and possible applications to $(f,\sigma,\delta)$-codes}

\maketitle

\begin{abstract}
Let $S$ be a unital ring, $S[t;\sigma,\delta]$ a skew polynomial ring where $\sigma$ is an
injective endomorphism and $\delta$ a left $\sigma$-derivation,
and suppose $f\in S[t;\sigma,\delta]$ has degree $m$ and an invertible leading coefficient.
Using right division by $f$ to define the multiplication,
we obtain unital nonassociative algebras $S_f$ on the set
of skew polynomials in $S[t;\sigma,\delta]$ of degree less than $m$. We study the structure of these algebras.

 When $S$ is a Galois ring and $f$ base irreducible, these algebras yield families of finite unital nonassociative rings $A$, whose
set of (left or right) zero divisors has the form $pA$ for some prime $p$.

 For reducible $f$, the $S_f$ can be employed both to design
linear $(f,\sigma,\delta)$-codes over unital rings and to study their behaviour.
\end{abstract}

%*******************************************************************************************%
%
\section*{Introduction}
%
%*******************************************************************************************%

Let $S$ be a unital  ring.
In the present paper we construct a new class of nonassociative unital rings out of subsets of the skew polynomial ring
$R=S[t;\sigma,\delta]$ where $\sigma$ is an
injective endomorphism and $\delta$ a left $\sigma$-derivation.
Given a polynomial $f\in R=S[t;\sigma,\delta]$ of degree $m$, whose leading coefficient is a unit, it is  well-known
by now (e.g., cf. \cite{Mc}, \cite{JL}, \cite{DO} for commutative $S$) that it is possible to define a right division by $f$:
for all $g(t)\in R$,  there exist  uniquely determined $r(t),q(t)\in R$ with
 ${\rm deg}(r)<m$, such that $g(t)=q(t)f(t)+r(t).$
 What is much less known is the fact that we can take
the additive group $\{g\in R\,|\, {\rm deg} (g)<m \}$ of skew polynomials of degree less than $m$, i.e.
 the canonical representatives of the remainders in $R$ of right division by $f$, and
define a nonassociative unital ring structure $\circ$ on it via $g\circ h=gh \,\,{\rm mod}_r f $.
The resulting nonassociative ring $S_f$, also denoted $S[t;\sigma,\delta]/S[t;\sigma,\delta]f$, is a unital nonassociative algebra over a commutative
subring of $S$. If $f$ is two-sided (also called invariant), i.e. if
$S[t;\sigma,\delta]f$ is a two-sided ideal, then  $S[t;\sigma,\delta]/S[t;\sigma,\delta]f$ is the well-known
associative quotient algebra obtained by factoring out a principal two-sided  ideal. This
generalizes a construction introduced by Petit for the case when $S$ is a division ring
and thus $R=S[t;\sigma,\delta]$ left and right Euclidean \cite{P66}.

The algebras $S_f$ were previously introduced by Petit, but only for the case that $S$ is a division ring,
hence
$S[t;\sigma,\delta]$ left and right Euclidean \cite{P66}. In that setting, they
already appeared in \cite{DO.0}, \cite{DO},  \cite{OS}, and were used in space-time
 block coding, cf.  \cite{SPO12},  \cite{PS15.3}, \cite{PS15.4}.

We present two possible applications:
We first use our algebras to construct new families of finite nonassociative unital rings, especially generalized
nonassociative Galois rings. Generalized nonassociative Galois  rings were
 introduced in \cite{Cons} and investigated in \cite{Cons2}, \cite{Cons3}, \cite{Cons4}.
 They are expected to have wide-ranging applications in coding theory and cryptography \cite{Cons}.

 As a second application, we point out the canonical connection between the algebras $S_f$ and cyclic
 $(f,\sigma,\delta)$-codes. This connection was first mentioned
 in \cite{Pu15.1} for $S$ being a division ring. Well-known results from the literature, e.g.
 on the pseudo-linear map $T_f$ \cite{C} and on polynomials in Ore extensions from \cite{BL} or \cite{L}, are rephrased
 in this setting and put into a nonassociative context.

The paper is organized as follows.
We establish our basic terminology in Section \ref{sec:prel},  define the algebras $S_f$
in Section \ref{sec:2} and investigate their basic structure  in Section \ref{sec:structure}.

The matrix representing left multiplication with $t$ in $S_f$ yields the pseudolinear transformation $T_f$
associated to $f$ defined in \cite{BL} which is discussed in Section \ref{sec:maps}. We generalize
\cite[Theorem 13 (2),  (3), (4)]{LS} and show that
if $S_f$ has no zero divisors then $T_f$ is irreducible, i.e. $\{0\}$ and $S^m$ are the only
$T_f$-invariant left $S$-submodules of $S^m$.

In Section \ref{sec:FCRs}, we assume that $S$ is a finite chain ring. If $f$ is base irreducible then
 $S_f$ is a generalized nonassociative Galois ring. This yields new families of generalized
  nonassociative Galois rings.

We consider the connection between the algebras $S_f$ and cyclic $(f,\sigma,\delta)$-codes,
 in particular skew-constacyclic codes over finite chain rings,
 in Section \ref{sec:codes}:
We rephrase some results (for instance from \cite{B}, \cite{BU14}, \cite{BL},   \cite{JL}, \cite{BSU08}),
 by employing the algebras $S_f$ instead of dealing with cosets in the quotient module
 $S[t;\sigma,\delta]/S[t;\sigma,\delta]f$.
 For instance, the matrix generating a cyclic $(f,\sigma,\delta)$-code $\mathcal{C}\subset S^m$
 represents the right multiplication $R_g$ in $S_f$,
calculated with respect to the basis $1,t,\dots,t^{m-1}$, identifying an element $h=\sum_{i=0}^{m-1}a_it^i$
with the vector $(a_0,\dots,a_{m-1})$, cf. \cite{BL}. This matrix generalizes the circulant matrix from \cite{FG} and
is a control matrix of $\mathcal{C}$. We also show how to obtain
semi-multiplicative maps using their right multiplication. This paper is the starting point for several applications of the algebras $S_f$
to coding theory, e.g. to coset coding, and related areas.
Some are briefly explained in Section 7.

\section{Preliminaries} \label{sec:prel}

\subsection{Nonassociative algebras}\label{subsec:1}

Let $R$ be a unital commutative ring and let $A$ be an $R$-module.
We call $A$ an \emph{algebra} over $R$ if there exists an
$R$-bilinear map $A\times A\mapsto A$, $(x,y) \mapsto x \cdot y$, denoted simply by juxtaposition $xy$,
the  \emph{multiplication} of $A$.
An algebra $A$ is called \emph{unital} if there is
an element in $A$, denoted by 1, such that $1x=x1=x$ for all $x\in A$.
We will only consider unital algebras.

For an $R$-algebra $A$, associativity in $A$ is measured by the {\it associator} $[x, y, z] = (xy) z - x (yz)$.
The {\it left nucleus} of $A$ is defined as ${\rm Nuc}_l(A) = \{ x \in A \, \vert \, [x, A, A]  = 0 \}$, the
{\it middle nucleus}  as ${\rm Nuc}_m(A) = \{ x \in A \, \vert \, [A, x, A]  = 0 \}$ and  the
{\it right nucleus}  as ${\rm Nuc}_r(A) = \{ x \in A \, \vert \, [A,A, x]  = 0 \}$.
${\rm Nuc}_l(A)$, ${\rm Nuc}_m(A)$ and ${\rm Nuc}_r(A)$ are associative subalgebras of $A$.
Their intersection
 ${\rm Nuc}(A) = \{ x \in A \, \vert \, [x, A, A] = [A, x, A] = [A,A, x] = 0 \}$ is the {\it nucleus} of $A$.
${\rm Nuc}(A)$ is an associative
subalgebra of $A$ containing $R1$ and $x(yz) = (xy) z$ whenever one of the elements $x, y, z$ is in
${\rm Nuc}(A)$. The  {\it commuter} of $A$ is defined as ${\rm Comm}(A)=\{x\in A\,|\,xy=yx \text{ for all }y\in A\}$ and
the {\it center} of $A$ is ${\rm C}(A)=\{x\in A\,|\, x\in \text{Nuc}(A) \text{ and }xy=yx \text{ for all }y\in A\}$
 \cite{Sch}.

  An algebra $A\not=0$ over a field $F$ is called a {\it division algebra} if for any $a\in A$, $a\not=0$,
the left multiplication  with $a$, $L_a(x)=ax$,  and the right multiplication with $a$, $R_a(x)=xa$, are bijective.
A division algebra $A$ does not have zero divisors.
If $A$ is a finite-dimensional algebra over $F$, then
$A$ is a division algebra over $F$ if and only if $A$ has no zero divisors.

\subsection{Skew polynomial rings}

Let $S$ be a unital associative (not necessarily commutative) ring, $\sigma$ a ring endomorphism of $S$ and
$\delta:S\rightarrow S$ a \emph{(left) $\sigma$-derivation}, i.e.
an additive map such that
$$\delta(ab)=\sigma(a)\delta(b)+\delta(a)b$$
for all $a,b\in S$, implying $\delta(1)=0$. The \emph{skew polynomial ring} $R=S[t;\sigma,\delta]$
is the set of skew polynomials
$$a_0+a_1t+\dots +a_nt^n$$
with $a_i\in S$, where addition is defined term-wise and multiplication by
$$ta=\sigma(a)t+\delta(a) \quad (a\in S).$$
That means,
$$at^nbt^m=\sum_{j=0}^n a(\Delta_{n,j}\,b)t^{m+j}$$
$(a,b\in S)$, where the map $\Delta_{n,j}$ is defined recursively via
$$\Delta_{n,j}=\delta(\Delta_{n-1,j})+\sigma (\Delta_{n-1,j-1}),$$
with $ \Delta_{0,0}=id_S$, $\Delta_{1,0}=\delta$, $\Delta_{1,1}=\sigma $ and so $\Delta_{n,j}$ is the sum of all polynomials in $\sigma$ and $\delta$
of degree $j$ in $\sigma$ and degree $n-j$ in $\delta$ (\cite[p.~2]{J96} or \cite[p.~4]{BL}).
If $\delta=0$, then $\Delta_{n,j}=\sigma^n$.

$S[t;\sigma]=S[t;\sigma,0]$ is called a \emph{twisted polynomial ring} and
$S[t;\delta]=S[t;id,\delta]$ a \emph{differential polynomial ring}.
For $\sigma=id$ and $\delta=0$, we obtain the usual ring of left polynomials $S[t]=S[t;id,0]$.

 For $f=a_0+a_1t+\dots +a_nt^n$ with $a_n\not=0$ define ${\rm deg}(f)=n$ and ${\rm deg}(0)=-\infty$.
Then ${\rm deg}(gh)\leq{\rm deg} (g)+{\rm deg}(h)$ (with equality if $h$ has an invertible leading coefficient,
or $g$ has an invertible leading coefficient and $\sigma$ is injective, or if $S$ is a division ring).
 An element $f\in R$ is \emph{irreducible} in $R$ if it is not a unit and  it has no proper factors, i.e if there do not exist $g,h\in R$ with
 ${\rm deg}(g),{\rm deg} (h)<{\rm deg}(f)$ such
 that $f=gh$.

Suppose  $D$ is a division ring. Then $R=D[t;\sigma,\delta]$ is a left principal ideal domain (i.e.,  every left ideal in $R$ is of the form $Rf$)  and
there is a right division algorithm in $R$  \cite[p.~3]{J96}: for all $g,f\in R$, $g\not=0$, there exist unique $r,q\in R$,
and ${\rm deg}(r)<{\rm deg}(f)$, such that
$$g=qf+r$$
(cf. Jacobson \cite{J96} and Petit \cite{P66}, note that Jacobson calls what we call right a left division
algorithm and vice versa.).
If $\sigma$  is a ring automorphism then $R=D[t;\sigma,\delta]$ is a left and right principal ideal domain (a PID)
 \cite[p.~6]{J96} and
there is also a left division algorithm in $R$ \cite[p.~3 and Prop. 1.1.14]{J96}.

%%%%%%%%%%%%%%%%%%%%%%%%%%%%%%%%%%%%%%%%%%%%%%%%%%%%%%%%%%%%%%%%%%%%%%%

\section{Nonassociative rings obtained from skew polynomials rings} \label{sec:2}

%%%%%%%%%%%%%%%%%%%%%%%%%%%%%%%%%%%%%%%%%%%%%%%%%%%%%%%%%%%%%%%%%%%%%%%%%

 From now on, let $S$ be a unital ring and $S[t;\sigma,\delta]$ a skew polynomial ring where $\sigma$ is injective.
 $S[t;\sigma,\delta]$ is generally neither a left nor a right Euclidean ring (unless $S$ is a division ring).
Nonetheless, we can still perform a left and right division by a polynomial $f \in R=S[t;\sigma,\delta]$, if
$f(t)=\sum_{i=0}^{m}d_it^i$ has an invertible leading coefficient $LC(f)=d_m$
(this was already observed for twisted polynomial rings and special cases of $S$ and assuming $\sigma\in{\rm Aut}(S)$ for instance in
 \cite[p.~391]{Mc}, \cite[p.~4]{JL}, \cite[3.1]{DO}):

\begin{proposition} \label{prop:mainr}
Let $f(t)\in R=S[t;\sigma,\delta]$ have degree $m$ and an invertible leading coefficient.
\\ (i) For all $g(t)\in R$ of degree $l\geq m$,  there exist  uniquely determined $r(t),q(t)\in R$ with
 ${\rm deg}(r)<{\rm deg}(f)$, such that
$$g(t)=q(t)f(t)+r(t).$$
(ii) Assume $\sigma\in{\rm Aut}(S)$.
Then for all $g(t)\in R$ of degree $l\geq m$,  there exist  uniquely determined $r(t),q(t)\in R$
with ${\rm deg}(r)<{\rm deg}(f)$, such that
$$g(t)=f(t)q(t)+r(t).$$
\end{proposition}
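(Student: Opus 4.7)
The plan is to mimic the classical degree-reduction argument from the division-ring case, being careful that $S$ may have zero divisors: invertibility of $d_m = LC(f)$ (and, for part (ii), surjectivity of $\sigma$) is what keeps the procedure going and guarantees uniqueness.

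For (i) I would argue by induction on $l = \deg(g)$. The base case $l < m$ is trivial (take $q = 0$, $r = g$). For $l \geq m$, write $g = c_l t^l + (\text{lower terms})$ with $c_l \in S$, and try to cancel the top term by subtracting off something of the form $a\, t^{l-m} f$. Expanding $t^{l-m} d_m$ via the commutation rule $ta = \sigma(a)t + \delta(a)$ (iteratively, using the $\Delta_{n,j}$ recalled in the preliminaries) gives $t^{l-m} d_m = \sigma^{l-m}(d_m)\, t^{l-m} + (\text{lower})$, so that $a\, t^{l-m} f$ has leading term $a \sigma^{l-m}(d_m)\, t^l$. Since $d_m$ is a unit and $\sigma$ is a ring endomorphism (so $\sigma^{l-m}(d_m^{-1})$ is a two-sided inverse of $\sigma^{l-m}(d_m)$), I can set $a = c_l \sigma^{l-m}(d_m^{-1})$, and then $g - a\, t^{l-m} f$ has degree strictly less than $l$. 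The induction hypothesis applies and yields $q$ and $r$.

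For uniqueness, suppose $q_1 f + r_1 = q_2 f + r_2$ with $\deg(r_i) < m$, and set $q = q_1 - q_2$, $r = r_2 - r_1$, so $qf = r$ with $\deg(r) < m$. If $q \neq 0$, write $q = e_k t^k + (\text{lower})$ with $e_k \neq 0$; then the same expansion as above shows $qf$ has leading term $e_k \sigma^k(d_m)\, t^{k+m}$. Because $\sigma^k(d_m)$ is a unit, $e_k \sigma^k(d_m) = 0$ would force $e_k = 0$, a contradiction; so $\deg(qf) = k+m \geq m$, contradicting $\deg(r) < m$. Hence $q = 0$ and $r = 0$.

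For (ii) the symmetric argument works once $\sigma$ is an automorphism. I would again induct on $l = \deg(g)$ and look for a term $f \cdot a\, t^{l-m}$ that matches the top of $g$. Using $t^m a = \sigma^m(a) t^m + (\text{lower})$, the leading term of $f \cdot a\, t^{l-m}$ is $d_m \sigma^m(a)\, t^l$, and solving $d_m \sigma^m(a) = c_l$ demands $a = \sigma^{-m}(d_m^{-1} c_l)$ -- this is precisely the step that uses surjectivity of $\sigma$ (injectivity alone, which held throughout (i), is no longer enough). Uniqueness is then handled by the analogous computation: if $fq = r$ with $\deg(r) < m$ and $q = e_k t^k + \cdots$ with $e_k \neq 0$, the leading term of $fq$ is $d_m \sigma^m(e_k)\, t^{k+m}$, which is nonzero because $\sigma^m(e_k) \neq 0$ (as $\sigma$ is injective) and $d_m$ is a unit.

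The main obstacle, and the only subtlety compared to the division-ring case, is that $S$ is allowed to have zero divisors: one has to locate every spot where the argument secretly needs a product of two specific elements to be nonzero, and then verify that it is nonzero by invoking the invertibility of $d_m$ together with the action of $\sigma$ (injective in (i), bijective in (ii)). Aside from that, the argument is a completely standard Euclidean induction.
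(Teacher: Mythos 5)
Your proposal is correct and takes essentially the same route as the paper: the same explicit degree-reduction term $s_l\sigma^{l-m}(d_m^{-1})t^{l-m}f(t)$ (iterated, which is your induction) for existence, and the same degree comparison for uniqueness, resting on the fact that $\sigma^j(d_m)$ is invertible and hence not a zero divisor. If anything, your choice $a=\sigma^{-m}(d_m^{-1}c_l)$ in (ii) is more careful than the paper's displayed term $\sigma^{-m}(s_l)\sigma^{-m}(d_m^{-1})$, whose ordering cancels the top coefficient only when $S$ is commutative or $d_m$ is central.
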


\begin{proof}
(i)
Let $f(t)=\sum_{i=0}^{m}d_it^i$ and $g(t)=\sum_{i=0}^{l}s_it^i$ be two skew polynomials
in $R$ of degree $m$ and $l$. Suppose that $l>m$ and that the leading coefficient $LC(f)=d_m$ of $f$
is invertible. Since
$1=\sigma(d_m d_m^{-1})=\sigma(d_m)\sigma(d_m^{-1})$, we know that $\sigma(d_m)$ and thus
$\sigma^j(d_m)$ is invertible for any integer $j\geq 0$.
Now
$$g(t)-s_l \sigma^{l-m}(d_m^{-1})t^{l-m}f(t)=g(t)-s_l \sigma^{l-m}(d_m^{-1})t^{l-m} (d_mt^m+\sum_{i=0}^{m-1}d_it^i)$$
$$=g(t)-s_l \sigma^{l-m}(d_m^{-1})t^{l-m} d_mt^m -\sum_{i=0}^{m-1}s_l \sigma^{l-m}(d_m^{-1})t^{l-m}d_it^i$$
$$=g(t)-s_l \sigma^{l-m}(d_m^{-1})(\sum_{j=0}^{l-m}\Delta_{l-m,j}(d_m)t^j)t^m-\sum_{i=0}^{m-1}s_l \sigma^{l-m}(d_m^{-1})
(\sum_{j=0}^{l-m}\Delta_{l-m,j}(d_m)t^j)t^i$$
$$=g(t)-s_l \sigma^{l-m}(d_m^{-1})\Delta_{l-m,l-m}(d_m)t^l$$
$$-s_l \sigma^{l-m}(d_m^{-1})
\sum_{j=0}^{l-m-1}\Delta_{l-m,j}(d_m)t^{j+m}-\sum_{i=0}^{m-1}\sum_{j=0}^{l-m}s_l \sigma^{l-m}(d_m^{-1})\Delta_{l-m,j}(d_j)t^{i+j}$$
$$=g(t)-s_l t^l$$
$$-s_l \sigma^{l-m}(d_m^{-1})
\sum_{j=0}^{l-m-1}\Delta_{l-m,j}(d_m)t^{j+m}-\sum_{i=0}^{m-1}\sum_{j=0}^{l-m}s_l \sigma^{l-m}(d_m^{-1})\Delta_{l-m,j}(d_j)t^{i+j}.$$
Note that we used that $\Delta_{l-m,l-m}(d_m)=\sigma^{l-m}(d_m)$ in the last equation. Therefore the polynomial
$g(t)-s_l \sigma^{l-m}(d_m)t^{l-m}f(t)$
has degree $<l$. By iterating this argument, we find $r,q\in R$ with
 ${\rm deg}(r)<{\rm deg}(f)$, such that
$$g(t)=q(t)f(t)+r(t).$$
To prove uniqueness of $q(t)$ and the remainder $r(t)$, suppose we have
$$g(t)=q_1(t)f(t)+r_1(t)=q_2(t)f(t)+r_2(t).$$
Then $(q_1(t)-q_2(t))f(t)=r_2(t)-r_1(t)$. If $q_1(t)-q_2(t)\not=0$ and observing that $f$ has invertible leading
coefficient such that $\sigma(d_m)^j$ cannot be a zero divisor for any positive $j$, we conclude that the degree of the left-hand side of the
equation is greater than ${\rm deg}(f)$ and the degree of $r_2(t)-r_1(t)$ is less than ${\rm deg}(f)$, thus
$q_1(t)=q_2(t)$ and  $r_1(t)=r_2(t)$.
\\ (ii)  The proof is along similar lines as the one of (i), using that the polynomial
$g(t)-f(t)  \sigma^{-m}(s_l) \sigma^{-m}(d_m^{-1})t^{l-m}$
has degree $<l$ and iterating this argument. The uniqueness of $q(t)$ and the remainder is proved analogously as in (i).
\end{proof}

In the following,  we always assume that
  $$f(t)\in S[t;\sigma,\delta] \text{\emph{ has degree} } m>1 \text{\emph{ and an invertible leading coefficient
  } }LC(f).$$
Let ${\rm mod}_r f$ denote the remainder of right division by $f$ and
 ${\rm mod}_l f$ the remainder of left division by $f$.
Since the remainders are uniquely determined, the skew polynomials of degree less that $m$ canonically represent the
elements of the left  $S[t;\sigma,\delta]$-module $S[t;\sigma,\delta]/ S[t;\sigma,\delta]f$ and when $\sigma\in{\rm Aut}(S)$,
for the right $S[t;\sigma,\delta]$-module
$S[t;\sigma,\delta]/ fS[t;\sigma,\delta]$.

 \begin{definition} \label{def:Petit}
 Suppose $f(t)=\sum_{i=0}^{m}d_it^i\in R=S[t;\sigma,\delta]$.
\\ (i) The additive group $\{g\in R\,|\, {\rm deg}(g)<m\}$  together with the multiplication
$$g\circ h=gh \,\,{\rm mod}_r f$$
defined for all $g,h\in R$ of degree less than $m$, is a unital nonassociative ring $S_f$ also denoted by $R/Rf$.
\\ (ii) Suppose $\sigma\in{\rm Aut}(S)$. Then the additive group $\{g\in R\,|\, {\rm deg}(g)<m\}$  together with the multiplication
$$g\diamond h=gh \,\,{\rm mod}_l f $$
defined for all $g,h\in R$ of degree less than $m$, is a unital nonassociative ring $\,_fS$ also denoted by $R/fR$.
\end{definition}

$S_f$ and $\,_fS$ are unital algebras over $S_0=\{a\in S\,|\, ah=ha \text{ for all } h\in S_f\}$,
which is a commutative subring of $S$.
If $S$ is a division ring, Definition \ref{def:Petit} is Petit's algebra construction  \cite{P66}  and $S_0$  is a
subfield of $S$.
In the following, we therefore call the algebras $S_f$  \emph{Petit algebras}.

\begin{remark} \label{rem:1}
(i) Let $g,h\in R$ have degrees less than $m$.
  If ${\rm deg}(gh)<m$ then the multiplication $g\circ h$ in $S_f$ and $g\diamond h$ in $\,_fS$ is the usual multiplication of  polynomials in $R$.
\\ (ii) If $Rf$ is a two-sided ideal in $R$ (i.e. $f$ is \emph{two-sided}, also called \emph{invariant}) then $S_f$ is
the   associative quotient algebra
 obtained by factoring out the ideal generated by a two-sided $f\in S[t;\sigma,\delta]$.
\\ (iii) If $f\in S[t;\sigma,\delta]$ is reducible then $S_f$ contains zero divisors:
if $f(t)=g(t)h(t)$ then $g(t)$ and $h(t)$ are zero divisors in $S_f$.
The argument leading up to \cite[Section 2., (6)]{P66} shows that  if $S$ is a division ring, then
$S_f$ has no zero divisors if and only if $f$ is irreducible, which is in turn equivalent to
$S_f$ being a right division ring (i.e., right multiplication $R_h$ in $S_f$ is bijective for all $0\not=h\in S_f$).

However, for general rings $S$ it can happen that $S_f$ has zero divisors,  even when $f$ is irreducible.
\\ (iv) For all invertible $a\in S$ we have $S_f=S_{af}$, so that without loss of generality it suffices to only
consider monic polynomials in the construction.
\end{remark}

It suffices to consider the algebras $S_f$, since we have the following canonical anti-automorphism
(cf. \cite[(1)]{P66} when $S$ is a division ring, the proof is analogous):

 \begin{proposition}
 Let $f\in R=S[t;\sigma,\delta]$ have an invertible leading coefficient and let $\sigma\in{\rm Aut}(S)$.
 The canonical anti-automorphism
 $$\psi: S[t;\sigma,\delta]\rightarrow S^{op}[t;\sigma^{-1},-\delta\circ\sigma^{-1}],$$
 $$\psi(\sum_{k=0}^{n}a_kt^k)=\sum_{k=0}^{n}(\sum_{i=0}^{k}\Delta_{n,i}(a_k))t^k$$
 between the skew polynomial rings $S[t;\sigma,\delta]$ and $ S^{op}[t;\sigma^{-1},-\delta\circ\sigma^{-1}]$ induces an anti-automorphism
 between the rings $$S_f=S[t;\sigma, \delta]/ S[t;\sigma,\delta]f$$
  and
 $$\,_{\psi(f)}S=S^{op}[t;\sigma^{-1},-\delta\circ\sigma^{-1}]/\psi(f) S^{op}[t;\sigma^{-1},-\delta\circ\sigma^{-1}].$$
\end{proposition}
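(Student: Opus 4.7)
The plan is to follow Petit's original argument \cite[(1)]{P66} for division rings and to verify that every step survives the weakening of $S$ from a division ring to an arbitrary unital ring with $\sigma\in\mathrm{Aut}(S)$. The proof will proceed in three stages: first, show that $\psi$ is a degree-preserving anti-isomorphism between $R=S[t;\sigma,\delta]$ and $R'=S^{op}[t;\sigma^{-1},-\delta\circ\sigma^{-1}]$; second, check that $\psi$ carries the left ideal $Rf\subset R$ onto the right ideal $\psi(f)R'\subset R'$, so that it restricts to a bijection between the canonical representatives of degree $<m$ on each side; third, combine these facts with right division by $f$ in $R$ and left division by $\psi(f)$ in $R'$ (both available thanks to Proposition \ref{prop:mainr}) to translate the multiplication $\circ$ of $S_f$ into the multiplication $\diamond$ of $\,_{\psi(f)}S$.

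For the first stage, additivity is immediate from the formula, so the substance is the anti-multiplicativity $\psi(pq)=\psi(q)\psi(p)$. By bilinearity this reduces to monomials $at^n$ and $bt^m$, where one expands $at^n\cdot bt^m=\sum_j a(\Delta_{n,j}b)t^{m+j}$ and matches it against the analogous expansion in $R'$ built from $\sigma^{-1}$ and $-\delta\circ\sigma^{-1}$. The crucial input is that the recursion $\Delta_{n,j}=\delta\Delta_{n-1,j}+\sigma\Delta_{n-1,j-1}$ transports under $\psi$ to the corresponding recursion for the analogous operators in $R'$, which in turn rests on the fact that $\sigma^{-1}$ is an automorphism of $S^{op}$ and $-\delta\circ\sigma^{-1}$ is a $\sigma^{-1}$-derivation of $S^{op}$. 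Degree preservation can then be read off from the leading term, and since $\psi(f)$ retains an invertible leading coefficient, Proposition \ref{prop:mainr}(ii) applies in $R'$ and guarantees that $\,_{\psi(f)}S$ is well defined.

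The second stage is immediate from anti-multiplicativity: if $h=qf\in Rf$, then $\psi(h)=\psi(f)\psi(q)\in \psi(f)R'$, and the reverse inclusion follows via $\psi^{-1}$; restricting to polynomials of degree $<m$ and invoking degree preservation yields the desired bijection of representatives. For the third stage, given $g,h\in R$ of degree $<m$, I would use right division to write $gh=qf+r$ with $\deg r<m$, so that $g\circ h=r$ in $S_f$; applying $\psi$ gives
\[
\psi(h)\psi(g)=\psi(gh)=\psi(f)\psi(q)+\psi(r),\qquad \deg\psi(r)<m,
\]
which is exactly the left division of $\psi(h)\psi(g)$ by $\psi(f)$ in $R'$, so $\psi(h)\diamond\psi(g)=\psi(r)=\psi(g\circ h)$ in $\,_{\psi(f)}S$. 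I expect the only serious obstacle to be the anti-multiplicativity identity on monomials from the first stage: this is the same combinatorial statement about the operators $\Delta_{n,j}$ that underlies Petit's division-ring version, but now it must be checked over a noncommutative ring without the cancellation properties of a division ring, so some care is needed to track how each application of $\sigma$ and $\delta$ in the recursion for $\Delta_{n,j}$ interacts with the reversal of factors. Once that identity is secured, the remaining two stages are formal.
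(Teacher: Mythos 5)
Your proposal is correct and matches the paper's approach exactly: the paper gives no independent argument but simply remarks that Petit's division-ring proof of \cite[(1)]{P66} carries over, which is precisely what you verify --- anti-multiplicativity of $\psi$ via the $\Delta_{n,j}$ recursion, $\psi(Rf)=\psi(f)R'$, and transport of $\circ$ to $\diamond$ through the uniqueness of left division by $\psi(f)$ from Proposition \ref{prop:mainr}(ii). Your explicit identification of where $\sigma\in{\rm Aut}(S)$ and the invertible leading coefficient of $\psi(f)$ enter is a faithful (indeed more detailed) rendering of the argument the paper leaves implicit.
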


Note that if $\delta=0$ and $\sigma\in{\rm Aut}(S)$, we have
$$\psi(\sum_{k=0}^{n}a_kt^k)=\sum_{k=0}^{n}\sigma^{-k}(a_k)t^k.$$

%%%%%%%%%%%%%%%%%%%%%%%%%%%%%%%%%%%%%%%%%%%%%%%%%%%%%%%%%%%%%%%%%%%%%%%

\section{Some structure theory} \label{sec:structure}

%%%%%%%%%%%%%%%%%%%%%%%%%%%%%%%%%%%%%%%%%%%%%%%%%%%%%%%%%%%%%%%%%%%%%%%%%

\subsection{}
In the following, let $f\in R=S[t;\sigma,\delta]$ be monic of degree $m>1$ and $\sigma$ injective.
 When
 $S$ is a division ring,  the structure of $S_f$ is extensively investigated in  \cite{P66}. For instance,
 if $S$ is a division ring and the $S_0$-algebra $S_f$ is finite-dimensional, or free of finite rank as a right module
 over its right nucleus, then $S_f$ is a division algebra if and only if $f(t)$ is irreducible \cite[(9)]{P66}.

Some of the results in \cite{P66} carry over  to our more general setting:

\begin{theorem}  \label{thm:main1}
 (i) $S_f$ is a free left $S$-module  of rank $m$ with basis $t^0=1,t,\dots,t^{m-1}$.
 \\ (ii) $S_f$ is associative if and only if $f$ is two-sided.
\\ (iii) If $S_f$ is not associative then
$$S\subset{\rm Nuc}_l(S_f),\,\,S\subset{\rm Nuc}_m(S_f)$$
and
$$\{g\in R\,|\, {\rm deg}(g)<m \text{ and }fg\in Rf\}= {\rm Nuc}_r(S_f).$$
When $S$ is a division ring, the inclusions become equalities.
\\ (iv)
We have $t\in {\rm Nuc}_r(S_f)$, if and only if
the powers of $t$ are associative, if and only if $t^mt=tt^m$ in $S_f$.
\\ (v)  If $S$ is a division ring and $S_f$ is not associative then
$$C(S_f)=S_0.$$
 (vi) Let $f(t)=\sum_{i=0}^{m}d_it^i\in S[t;\sigma]$ with $d_0$ invertible.
If the endomorphism  $L_t$ which is the left multiplication by $t$ as defined in Section \ref{subsec:1} is
 surjective then $\sigma$ is surjective.
  In particular, if $S$ is a division ring and $f$ irreducible, then $L_t$ surjective implies $\sigma$ surjective.

  Moreover, if $\sigma$ is bijective then $L_t$ is surjective.
\end{theorem}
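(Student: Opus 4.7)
The plan is to follow Petit's original strategy for division rings in \cite{P66} and track, step by step, which arguments go through for a general unital $S$ and which genuinely need cancellation. Part (i) is immediate: the right division algorithm of Proposition \ref{prop:mainr}(i) gives a bijection between skew polynomials of degree $<m$ and their classes in $S_f$, so $1, t, \ldots, t^{m-1}$ is a free left $S$-basis.

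The central computation is for part (iii). For $s \in S$, both $sg$ and $gs$ still have degree less than $m$ whenever $g$ does, so $s \circ g$ and $g \circ s$ coincide with their $R$-products and the associators $[s, g, h]$ and $[g, s, h]$ vanish by associativity in $R$; this gives $S \subseteq \mathrm{Nuc}_l(S_f) \cap \mathrm{Nuc}_m(S_f)$. For the right nucleus, the key identity is: writing $hk = qf + r$ with $\deg r < m$ and $kg = q'f + r'$ similarly, $R$-associativity yields
\[ [h, k, g] \equiv -q f g \pmod{f}. \]
Thus $g \in \mathrm{Nuc}_r(S_f)$ iff $qfg \in Rf$ for every quotient $q$ arising from some such $hk$; taking $h = t$, $k = t^{m-1}$ produces $q = 1$, so $fg \in Rf$ is necessary, and it is obviously sufficient. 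The division-ring equalities $\mathrm{Nuc}_l = \mathrm{Nuc}_m = S$ I would prove by Petit's minimal-degree argument: if $g \in \mathrm{Nuc}_l \setminus S$ has minimal positive degree, a suitably chosen associator reduced modulo $f$ produces an element of $\mathrm{Nuc}_l$ of strictly smaller positive degree, a contradiction that uses cancellation in $S$. This is the step I expect to be the main obstacle, and it is also where Remark \ref{rem:1}(iii) warns that the inclusions can be strict over general $S$.

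Parts (ii), (iv), (v) then follow from (iii). For (ii), associativity of $S_f$ forces every $g$ of degree $<m$ into $\mathrm{Nuc}_r$, hence $fg \in Rf$; the division algorithm extends this to $fR \subseteq Rf$, and the converse is trivial since $S_f = R/Rf$. For (iv), the chain "$t \in \mathrm{Nuc}_r \Rightarrow$ powers of $t$ associate $\Rightarrow t^m \circ t = t \circ t^m$" is tautological, and for the reverse (iii) says $t \in \mathrm{Nuc}_r$ iff $ft \in Rf$, which by a degree count is iff $ft = (t + c)f$ for some $c \in S$; expanding both sides yields precisely the equation $t^m \circ t = t \circ t^m$ once $t^m$ is replaced by its reduction $-\sum d_i t^i$. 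For (v), any $a \in S_0$ commutes with $t$ (forcing $\sigma(a) = a$, $\delta(a) = 0$) and with all of $S$, so $a$ is central in $R$ and hence $fa = af \in Rf$, placing $a$ in $\mathrm{Nuc}_r(S_f)$; combined with $S_0 \subseteq S \subseteq \mathrm{Nuc}_l \cap \mathrm{Nuc}_m$, this gives $S_0 \subseteq C(S_f)$. The reverse inclusion uses the division-ring equality $\mathrm{Nuc}_l = S$ to conclude $C(S_f) \subseteq S$, at which point commutativity with $t$ and with $S$ forces membership in $S_0$.

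Part (vi) is an explicit computation of $L_t$ in the basis $1, t, \ldots, t^{m-1}$. For $h = \sum a_i t^i$ with $\delta = 0$,
\[ L_t(h) = -\sigma(a_{m-1}) d_0 + \sum_{j=1}^{m-1} \bigl(\sigma(a_{j-1}) - \sigma(a_{m-1}) d_j\bigr) t^j. \]
Matching coefficients with an arbitrary target $\sum c_j t^j$, the constant-term equation $c_0 = -\sigma(a_{m-1}) d_0$ together with invertibility of $d_0$ forces $\sigma$ to be surjective once $L_t$ is. Conversely, bijectivity of $\sigma$ lets me solve $a_{m-1} = \sigma^{-1}(-c_0 d_0^{-1})$ and recursively $a_{j-1} = \sigma^{-1}(c_j - c_0 d_0^{-1} d_j)$, giving surjectivity of $L_t$. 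The "in particular" case reduces to the invertible-$d_0$ hypothesis because over a division ring, $d_0 = 0$ would produce a proper factorisation $f = g t$ with $\deg g = m - 1 \geq 1$, contradicting irreducibility of $f$.
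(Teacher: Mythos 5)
Your proposal is correct, and its backbone is the same Petit-style strategy as the paper's: the associator identity $[h,k,g]\equiv -qfg \pmod{Rf}$ (where $hk=qf+r$) is exactly the computation the paper runs, attributed to \cite[(2)]{P66} and \cite{CB}, and your part (vi) is the paper's explicit coefficient computation for $L_t$ (your version even fixes the paper's sign slip in the reduction of $\sigma(u_{m-1})t^m$; your recursion $a_{j-1}=\sigma^{-1}(c_j-c_0d_0^{-1}d_j)$ is the paper's formula written with the correct signs). Two local differences are worth recording. First, for ${\rm Nuc}_r(S_f)\subseteq\{g\,|\,fg\in Rf\}$ the paper chooses arbitrary $b,c$ with invertible leading coefficients and ${\rm deg}(b)+{\rm deg}(c)=m$, deducing that the quotient $q_1=LC(b)\sigma^l(LC(c))$ is invertible; your choice $h=t$, $k=t^{m-1}$ (legitimate since $f$ is assumed monic throughout this section) gives $q=1$ outright — a cleaner instance of the same idea. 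Second, and more substantially, for the nontrivial direction of (iv) the paper only asserts that "a careful analysis of the proof of \cite[(5)]{P66}" carries over, whereas you give a complete self-contained argument: $ft\in Rf$ iff $ft=(t+c)f$ by a degree and leading-coefficient count, and writing $r=t^m-f$ one checks $ft-(t+c)f=(tr-rt)-cf$ while $t^m\circ t=t\circ t^m$ unwinds to $tr-rt=(d_{m-1}-\sigma(d_{m-1}))f$, so the two conditions match with $c=d_{m-1}-\sigma(d_{m-1})$ forced by the $t^m$-coefficient; this is correct, also for $\delta\neq 0$, and is an improvement in self-containedness over the paper. The one soft spot is the division-ring equalities in (iii): you defer to "Petit's minimal-degree argument," which is citation-level parity with the paper (it likewise just points to \cite[(2)]{P66}), but the mechanism you sketch — manufacturing nucleus elements of smaller positive degree — is vaguer than needed, since your own associator identity yields the equality directly: if $g\in{\rm Nuc}_l(S_f)$ has degree $k\geq 1$ over a division ring, take $b=t^{m-k}$ so that $gb$ has degree exactly $m$ and the quotient $q$ is an invertible scalar; then $[g,b,c]=0$ for all $c$ forces $fc\in Rf$ for all $c$ of degree less than $m$, making $f$ two-sided and $S_f$ associative, a contradiction.
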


\begin{proof}
(i) is clear.
\\
(ii) If $f$ is two-sided, $S_f$ is clearly associative.
Conversely,
if $S_f$ is associative then $S_f={\rm Nuc}_r(S_f)=\{g\in R\,|\, {\rm deg}(g)<m \text{ and } fg\in Rf\}.$ Thus $t\in {\rm Nuc}_r(S_f)$
and also $S\subset {\rm Nuc}_r(S_f)$. This means $f(t)t\in Rf(t)$ and for all $a\in S$, also
$f(t)a=g(t)f(t)$ for a suitable $g(t)\in R$. Comparing degrees (recall we  assume $f$ to have an invertible
leading coefficient) we see that $g(t)=b\in S$, so we get
$f(t)t\in Rf(t)$ and for all $a\in S$, also
$f(t)a=bf(t)$ for a suitable $b\in R$. Thus $f$ is invariant, i.e. two-sided.
\\ (iii)  The proof of the first two inclusions and that
$\{g\in R\,|\, {\rm deg}(g)<m \text{ and }fg\in Rf\}\subset {\rm Nuc}_r(S_f)$ is similar to \cite[(2)]{P66} (which proves the result for $S$ being a division ring),
 as this inclusion does not need $S$ to be a division ring.
For instance, for $a\in {\rm Nuc}_l(S_f)=\{a\in S_f\,|\, [a,b,c]=0
\text{ for all }b,c\in S_f\} $ we have
$[a,b,c]=0$ if and only if $pfc=0$ for some $p\in R$. If $a$ has degree 0 then $p=0$ as observed in \cite[(2)]{P66} so
$S\subset {\rm Nuc}_l(S_f)$.
It remains to show that ${\rm Nuc}_r(S_f)\subset\{g\in R\,|\, {\rm deg}(g)<m \text{ and }fg\in Rf\}$ \cite{CB}:
Let $b,c,d\in R$ have degree less than $m$. Write $bc=q_1f+r_1$, $cd=q_2f+r_2$ with $q_i,r_i\in R$ uniquely determined of degree
smaller than $m$. A straightforward calculation as in \cite[(2)]{P66} shows that in $S_f$ we thus have
$(bc)d=b(cd)$ if and only if $q_1fd\, {\rm mod}_r f=0$ if and only if $q_1fd\in Rf$.

Let now $d\in{\rm Nuc}_r(S_f)$ and choose $b,c\in R$ with invertible leading coefficient such that
${\rm deg}(b)+{\rm deg}(c)=m$, so that ${\rm deg}(bc)=m$. Write $bc=q_1f+r_1$. Then ${\rm deg}(q_1f)={\rm deg}(q_1)+m$.
But here $bc=q_1f+r_1$ also means ${\rm deg}(q_1)=0$, so $q_1\in S$ is non-zero. The leading coefficient of
$bc$ is $LC(b)LC(c)$ and
the leading coefficient of $q_1f$ is $q_1 $.
Therefore
$q_1=LC(b)\sigma^l(LC(c))$ is invertible in $S$.
Since  $d\in{\rm Nuc}_r(S_f)$ implies $q_1fd\in Rf$, this yields $fd\in Rf$.
\\ (iv) If $ft\in Rf$ then $t\in {\rm Nuc}_r(S_f)$ by (iii),
hence $t,\dots, t^{m-1}\in {\rm Nuc}_r(S_f)$, and so $[t^i,t^j,t^k]=0$ for all $i,j,k<m$, meaning the powers of $t$
are associative. In particular, this implies $[t,t^{m-1},t]=0$, that is $t^mt=tt^m$.
A careful analysis of the proof of \cite[(5)]{P66} shows that the other implications can be proved analogously as in
\cite[(5)]{P66}, also when also holds when $S$ is not a division algebra, since we still have that
we have $[t^i,t^j,t^k]=0$ for all $i,j,k<m$ with $i+j<m$ analogously as in \cite[6]{P66}.
\\ (v) We have $C(S_f)={\rm Comm}(S_f)\cap {\rm Nuc}(S_f)={\rm Comm}(S_f)\cap S=S_0$.
\\ (vi)
 If $d_0$ is invertible and $\delta=0$ then $L_t$ surjective implies $\sigma$ surjective:
For $u=\sum_{i=0}^{m-1}u_it^i\in S_f$, we have (using the multiplication in $S_f$)
$$L_t(u)=\sum_{i=0}^{m-2}\sigma(u_i)t^{i+1}+\sigma(u_{m-1})t^m=\sum_{i=0}^{m-2}\sigma(u_i)t^{i+1}+
\sigma(u_{m-1})\sum_{i=0}^{m-1}d_it^i.$$
Suppose $L_t$ is surjective, then given any $b\in S$, there is $u\in S_f$ such that $L_t(u)=b$.
Comparing the constants in this equation, we obtain that for all $b\in S$ there is $u_{m-1}\in S$
such that $\sigma(u_{m-1})=bd_0$, i.e. for all $c\in S$
 there is $u_{m-1}\in S$ such that $\sigma(u_{m-1})=c$ \cite{CB}.

 The statement that if $S$ is a division ring and $f$ irreducible then $L_t$ is surjective
 implies $\sigma$ surjective is \cite[Section 2., (6)]{P66} and follows as a special case now.

 If $\sigma$ is bijective then $L_t$ is surjective: Let $g=\sum_{i=0}^{m-1}g_it^i$.
 Define $u_{m-1}=\sigma^{-1}(g_0d_0^{-1})$
  and $u_{i-1}=\sigma^{-1}(g_i)-u_{m-1}\sigma^{-1}(d_i)$.
 Then $L_t(u)=g$ \cite{CB}.
\end{proof}

Recall that the largest subalgebra of $R=S[t;\sigma,\delta]$ in which $Rf$ is a two-sided ideal is
the \emph{idealizer}  $I(f)=\{g\in R\,|\, fg\in Rf\}$ of $Rf$. The \emph{eigenring}
of $f$ is then defined as the quotient $E(f)=I(f)/Rf$. The
eigenring $E(f)=\{g\in R\,|\, {\rm deg}g<m \text{ and } fg\in Rf\}$ equals the right nucleus ${\rm Nuc}_r(S_f)$ by Theorem \ref{thm:main1} (iii)
(or see \cite[(2)]{P66} if $S$ is a division algebra)
 which, as the right nucleus, is an associative subalgebra of $S_f$, cf. Section \ref{subsec:1}.
 More precisely, the multiplication $\circ$ in $S_f$ makes ${\rm Nuc}_r(S_f)$ into an associative algebra
 which equals the associative quotient ring $E(f)$ equipped with the canonical multiplication induced on it by the multiplication
 on the ring $I(f)\subset R$.
 When $S$ is a division ring, non-trivial zero divisors in $E(f)={\rm Nuc}_r(S_f)$ correspond to factors of $f$:

\begin{proposition} (\cite[Proposition 4]{G})\label{prop:nuctest}
Let $S$ be a division ring  and $f\in R=S[t;\sigma,\delta]$.
\\ (i) Let $uv=0$ for some non-zero $u,v\in E(f)$, then the greatest common right divisor
$gcrd(f,u)$  is a non-trivial right factor of $f$.
($v\in R$ is  the \emph{greatest common right divisor of $f$ and $u$}, written $gcrd(f,u)=v$,
 if there are $s,d\in R$ such that $sf+du=v$.)
\\ (ii) Let $f\in R$ be \emph{bounded} (i.e., there exists $0\not=f^*\in R$ such that $Rf^*=f^*R$
is the largest two-sided ideal of $R$ contained in $Rf$) and $\sigma$ be an automorphism.
Then $f$ is irreducible if and only if
$E(f)={\rm Nuc}_r(S_f)$ has no non-trivial zero divisors.
\end{proposition}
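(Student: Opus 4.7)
The strategy is to prove (i) directly via the gcrd identity and then derive one direction of (ii) as its contrapositive; the converse of (ii) is the main obstacle and is where the boundedness hypothesis is indispensable.

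For (i), I would argue as follows. Suppose $u,v\in E(f)$ are nonzero with $u\circ v=0$ in $E(f)$; since $u,v$ have degree $<m$, this means $uv\in Rf$, say $uv=qf$ in $R$. Put $d=\mathrm{gcrd}(f,u)$; since $R=S[t;\sigma,\delta]$ is a left principal ideal domain (as $S$ is a division ring), $d$ exists and one can write $sf+eu=d$ for suitable $s,e\in R$, with $d$ right-dividing both $f$ and $u$. It remains to exclude the two extreme cases. If $d$ were a unit, then $1=s'f+e'u$ with $s'=d^{-1}s$ and $e'=d^{-1}e$; right-multiplying by $v$, and using both $uv=qf$ and the fact that $v\in E(f)$ (so $fv=q'f$ for some $q'\in R$), one obtains $v=s'fv+e'uv=(s'q'+e'q)f\in Rf$. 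Because $R$ is a domain and $\deg v<m=\deg f$, this forces $v=0$, contradicting the hypothesis. If instead $d$ were associated to $f$, i.e.\ $f=cd$ with $c\in R$ a unit, then $d\mid u$ on the right would give $u\in Rf$, so $u=0$ for degree reasons, again a contradiction. Hence $d$ is a proper right factor of $f$.

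The forward direction of (ii), ``$f$ irreducible $\Rightarrow$ $E(f)$ has no nontrivial zero divisors'', is immediate from the contrapositive of (i). The converse direction is the hard one and the main obstacle: given a proper factorization $f=gh$ with $\deg g,\deg h<m$, one must manufacture nonzero $u,v\in E(f)$ with $uv=0$. The naive candidates $g,h\in S_f$ need not satisfy $fg,fh\in Rf$, so they need not lie in the eigenring $E(f)=\{w\in R:\deg w<m,\ fw\in Rf\}$ at all, and extra input is genuinely needed to push factors of $f$ into $E(f)$. This is where both assumptions enter: the hypothesis that $\sigma$ is an automorphism makes $R$ a two-sided PID, and boundedness of $f$ provides an element $f^*\in R$ with $Rf^*=f^*R$ the largest two-sided ideal contained in $Rf$. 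In this bounded two-sided PID setting one can write $f^*=f\cdot f^\#=f^\#\cdot f$ (up to units) for a suitable $f^\#\in R$, and $f^\#$ can then be used to transport right factors of $f$ into the idealizer $I(f)$, producing nonzero zero divisors in $E(f)=I(f)/Rf$. Equivalently, under these hypotheses $E(f)\cong\mathrm{End}_R(R/Rf)^{\mathrm{op}}$, and the proper submodule $Rh/Rf$ of $R/Rf$ arising from a right factor $h$ of $f$ yields a nontrivial endomorphism with nontrivial kernel, hence a zero divisor. The delicate point, and the technical heart of the argument, is precisely this conversion of a factorization of $f$ into genuine elements of the eigenring.
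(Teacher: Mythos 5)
Your proof of (i) is correct and complete, and your derivation of the forward direction of (ii) from it is also correct; note that the paper itself offers no proof of this proposition at all (it is quoted verbatim from \cite[Proposition 4]{G}), and your B\'ezout-identity argument for (i) --- writing $s'f+e'u=1$ when $\mathrm{gcrd}(f,u)$ is a unit, right-multiplying by $v$, and using $fv\in Rf$ together with $uv\in Rf$ to force $v\in Rf$, hence $v=0$ by degree count --- is exactly the standard argument behind the cited result. The degree bookkeeping works because $S$ is a division ring, so degrees are additive and $R$ is a left PID, which you use correctly in both extreme cases.

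The converse of (ii), however, contains a genuine gap, and you have located but not closed it. The step ``the proper submodule $Rh/Rf$ of $R/Rf$ arising from a right factor $h$ yields a nontrivial endomorphism with nontrivial kernel'' is unjustified: a proper submodule of a module does not in general give rise to any nonzero non-injective endomorphism, and indeed this implication \emph{fails} for reducible unbounded $f$ --- there exist reducible differential operators whose eigenring is trivial (this phenomenon is precisely what Singer's reducibility test \cite{Si} navigates around), which shows the conclusion cannot follow from the factorization $f=gh$ alone and that boundedness must enter the actual construction, not merely be invoked. What is needed, and what your sketch never supplies, is a concrete element of the idealizer manufactured from the factorization: e.g.\ one passes to the Artinian algebra $A=R/Rf^*$ (here $\sigma$ being an automorphism makes $R$ a two-sided PID, so $A$ is an Artinian principal ideal, hence serial, ring), views $M=R/Rf$ as a finite-length $A$-module with $E(f)\cong\mathrm{End}_A(M)$, and then argues separately: if $M$ is decomposable, nontrivial idempotents give zero divisors; if $M$ is indecomposable but not simple, one must exhibit a nonzero nilpotent endomorphism, which uses the serial structure of $A$ (compare $\mathbb{Z}/p^2$, where multiplication by $p$ does the job). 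Your assertion that $f^*=f f^\#=f^\# f$ up to units is the right kind of ingredient, but as written the ``transport of right factors into $I(f)$'' is a claim, not an argument, and the verification $fu\in Rf$, $u\notin Rf$, $uv\in Rf$ for the transported elements is exactly the technical heart that is missing.
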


\begin{remark} \label{rem:7}
 Let $S$ be a division ring.
\\ (i) If $f$ is irreducible then ${\rm Nuc}_r(S_f)$ is an associative division algebra \cite[p.~17-19]{G}.
\\ (ii) Effective algorithms to compute ${\rm Nuc}_r(S_f)$  for $f\in\mathbb{F}_q(x)[t;\sigma]$ and
$f\in\mathbb{F}_q(x)[t;\delta]$   can be found in \cite{GZ}, for $R=\mathbb{F}_q[t;\sigma]$  in \cite{G0}, \cite{R}.
 Proposition \ref{prop:nuctest} is also employed for linear differential operators in \cite{Si},
  to factorize skew polynomials for $S=\mathbb{F}_q$ in \cite{G0}
and for $S=\mathbb{F}_q(x)$ in \cite{GZ}, \cite{GLN},  \cite{G}, without relating it to the algebras $S_f$ however.
\end{remark}

\begin{proposition} \label{prop:skewcodemain}
 Let  $f\in R=S[t;\sigma,\delta]$.
\\ (i) Every right divisor $g$ of $f$ of degree $<m$  generates a principal left ideal in $S_f$.

 All non-zero  left ideals in $S_f$ which contain a polynomial $g$
of minimal degree with invertible leading coefficient are principal ideal generated by $g$, and $g$ is a right divisor of $f$ in $R$.
\\ (ii) Each principal left ideal generated by a right divisor of $f$ is
an $S$-module which is isomorphic to a submodule of $S^m$.
\\ (iii) If $f$ is irreducible, then $S_f$ has no non-trivial principal left ideals
which contain a polynomial of minimal degree with invertible leading coefficient.
\end{proposition}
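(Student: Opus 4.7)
The plan is to handle (i), (ii), (iii) in order; most of the work is in (i), and (ii)--(iii) drop out of the structure it establishes.  The organizing idea is to identify, for any right divisor $g$ of $f$, the nonassociative-theoretic principal left ideal $S_f\circ g=\{p\circ g\mid p\in S_f\}$ with the module-theoretic quotient $Rg/Rf\subseteq R/Rf=S_f$.

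For the first statement of (i), given $g$ with $f=hg$, the inclusion $Rf=Rhg\subseteq Rg$ makes $Rg/Rf$ a well-defined left $R$-submodule of $S_f$, hence a left ideal of the nonassociative ring $S_f$ (closed under the $S_f$-action, and clearly principal with generator $g$).  The non-obvious point is that $Rg/Rf$ coincides with $S_f\circ g$: for $p\in R$, I right-divide $pg=Qf+r$ using Proposition~\ref{prop:mainr}(i), substitute $f=hg$ to rewrite $r=(p-Qh)g$, and use the degree bound $\deg(p-Qh)<m-\deg g$ to conclude $p-Qh\in S_f$ and $r=(p-Qh)\circ g\in S_f\circ g$.

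For the second statement of (i), let $I$ be a left ideal with $g\in I$ of minimal degree and invertible leading coefficient.  For any $h\in I$ I right-divide $h=qg+r$ with $\deg r<\deg g$; since $\deg(qg)<m$ the product equals $q\circ g$ in $S_f$, so $r=h-q\circ g\in I$, and minimality forces $r=0$, giving $I=S_f\circ g$.  To show $g$ is a right divisor of $f$ I apply the same principle to $t^{m-\deg g}\circ g$: the product $t^{m-\deg g}g$ has degree exactly $m$ with invertible leading coefficient (iterated $\sigma$-images of units remain units, cf.\ the proof of Proposition~\ref{prop:mainr}), so right-division by $f$ yields $t^{m-\deg g}g=af+r_0$ with $a\in S$ invertible; writing $r_0=q'g$ via the previous argument and rearranging yields $af=(t^{m-\deg g}-q')g$.  (The degenerate case $\deg g=0$ is handled separately: $g$ is then already a unit of $S$ and divides $f$ trivially.)

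For (ii), the map $R/Rh\to Rg/Rf$, $p+Rh\mapsto pg+Rf$, is a left $R$-module isomorphism --- well-defined because $f=hg$, surjective by construction, and injective because right multiplication by $g$ is injective in $R$ (invertibility of $LC(g)$ together with injectivity of $\sigma$ preclude leading-term cancellation).  Hence the principal left ideal is isomorphic as left $S$-module to $S_h$, free of rank $m-\deg g$ by Theorem~\ref{thm:main1}(i), and therefore embeds into $S^m$.  For (iii), any non-trivial principal left ideal $I$ satisfying the hypothesis contains, by (i), a minimal-degree $g$ with invertible leading coefficient which is a right divisor of $f$; irreducibility of $f$ together with $\deg g<m$ forces $\deg g=0$, so $g$ is a unit of $S$ and $1=g^{-1}\circ g\in I$, yielding $I=S_f$ and contradicting non-triviality.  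The main obstacle throughout is the identification $S_f\circ g=Rg/Rf$ in (i); once that is in hand, everything else follows from Euclidean-division bookkeeping.
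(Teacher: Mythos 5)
Your proposal is correct and takes the route the paper intends: the paper offers no argument beyond declaring the proof straightforward, and your identification of $S_f\circ g$ with $Rg/Rf$ followed by division-with-remainder bookkeeping (with the degenerate $\deg g=0$ case handled separately, and the unit-leading-coefficient trick giving $g$ as a right divisor of $f$) is precisely that straightforward proof, filled in carefully. The one point worth recording is that in the first claim of (i) and in (ii) you implicitly need $LC(g)$ invertible --- for the degree equality $\deg\bigl((p-Qh)g\bigr)=\deg(p-Qh)+\deg g$ and for injectivity of right multiplication by $g$ --- an assumption the proposition's wording omits but which matches the paper's standing practice of taking right divisors monic, as in Theorem~\ref{thm:newTheorem3.2}.
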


The proof is straightforward. If there is no  polynomial $g$
of minimal degree with invertible leading coefficient in a non-zero left ideal, then the ideal need not be principal, see
\cite[Theorem 4.1]{JL} for examples.

\begin{theorem} \label{thm:main2}
Let $f\in R=S[t;\sigma]$.
 \\ (i) The commuter ${\rm Comm}(S_f)=\{g\in S_f\,|\, gh=hg \text{ for all } h\in S_f\}$ contains the set
 $$\{\sum_{i=0}^{m-1}a_it^i \,|\, a_i\in {\rm Fix}(\sigma)\text{ and } ca_i=a_i\sigma^i(c)\text{ for all } c\in S \}.$$
If $t$ is left-invertible in $S_f$ and $S$ a division ring, the two sets are equal.
\\ (ii) ${\rm Fix}(\sigma)\cap  C(S) \subset S_0= {\rm Comm}(S_f)\cap S$.
If $t$ is left-invertible and $S$ a division ring, the two sets are equal.
\end{theorem}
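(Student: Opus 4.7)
The plan is to verify (i) by direct computation and then obtain (ii) as a corollary. For the forward inclusion of (i), suppose $g=\sum_{i=0}^{m-1}a_it^i$ with each $a_i\in\mathrm{Fix}(\sigma)$ and $ca_i=a_i\sigma^i(c)$ for every $c\in S$. I would argue that $gh=hg$ already as polynomials in $R=S[t;\sigma]$ for an arbitrary $h=\sum_jc_jt^j$, so that right division by $f$ cannot introduce any asymmetry. The expansions
$$gh=\sum_{i,j}a_i\sigma^i(c_j)t^{i+j},\qquad hg=\sum_{i,j}c_j\sigma^j(a_i)t^{i+j}=\sum_{i,j}c_ja_it^{i+j}$$
(the second using $a_i\in\mathrm{Fix}(\sigma)$) agree term-by-term by the commutation hypothesis, so $g\in\mathrm{Comm}(S_f)$ without having to invoke any associativity.

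For the reverse inclusion, assume $t$ is left-invertible in $S_f$ and $S$ is a division ring, and let $g=\sum a_it^i\in\mathrm{Comm}(S_f)$. The equation $g\circ c=c\circ g$ can be read off directly at the polynomial level, since both sides already have degree $<m$, and comparing coefficients of $t^i$ yields $a_i\sigma^i(c)=ca_i$ for all $i$ and $c\in S$. The equation $g\circ t=t\circ g$ is more delicate: $gt-tg=\sum_i(a_i-\sigma(a_i))t^{i+1}$ has a $t^m$ term that must be reduced via $t^m\equiv -\sum_{j=0}^{m-1}d_jt^j\pmod f$, producing
$$0\equiv -(a_{m-1}-\sigma(a_{m-1}))d_0+\sum_{k=1}^{m-1}\bigl((a_{k-1}-\sigma(a_{k-1}))-(a_{m-1}-\sigma(a_{m-1}))d_k\bigr)t^k.$$
The constant term forces $(a_{m-1}-\sigma(a_{m-1}))d_0=0$, after which the higher coefficients cascade to give $a_{k-1}=\sigma(a_{k-1})$ for all $k$, \emph{provided} $d_0\ne 0$.

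The crux, and the only nontrivial step, is to show that left-invertibility of $t$ forces $d_0\ne 0$. I would use the same reduction: if $ut=1$ in $S_f$ for some $u=\sum u_it^i$, then expanding $ut$ and reducing $t^m$ as above gives constant term $-u_{m-1}d_0$, hence $u_{m-1}d_0=-1$; in the division ring $S$ this forces $d_0$ to be a unit. Feeding this back completes the cascade and so the characterization of $\mathrm{Comm}(S_f)$.

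Part (ii) then follows formally. The inclusion $\mathrm{Fix}(\sigma)\cap C(S)\subseteq S_0$ is obtained by applying the forward direction of (i) to a constant $g=a\in S$, since the two hypotheses collapse to $\sigma(a)=a$ and $ca=ac$. Under the additional hypotheses, intersecting the explicit description of $\mathrm{Comm}(S_f)$ with $S$ singles out exactly those constants $a_0$ lying in $\mathrm{Fix}(\sigma)\cap C(S)$, giving the reverse inclusion.
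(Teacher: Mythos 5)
Your proof is correct and takes essentially the same route the paper intends: the containment is exactly the ``straightforward calculation'' the paper defers to (checking $gh=hg$ already in $R=S[t;\sigma]$ term-by-term, then extracting the conditions $ca_i=a_i\sigma^i(c)$ and $a_i\in{\rm Fix}(\sigma)$ from $g\circ c=c\circ g$ and $g\circ t=t\circ g$ after reducing $t^m$ modulo the monic $f$), and part (ii) follows formally by specializing to constants, just as in the paper. Your crux step---that left-invertibility of $t$ forces $d_0$ invertible by comparing constant terms in $ut=hf+1$---is precisely the degree argument the paper records in Remark \ref{re:left-inv}, so nothing here diverges from the paper's treatment.
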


\begin{proof}
(i) and (iii) are straightforward  calculations; both generalize
 \cite[(14), (15)]{P66}.
\\ (ii) follows from (i):
$S_0=\{a\in S\,|\, ah=ha \text{ for all } h\in S_f\}= {\rm Comm}(S_f)\cap S$ and
${\rm Fix}(\sigma)\cap C(S) \subset {\rm Comm}(S_f)\cap S= S_0$. If $t$ is left-invertible,
the two sets are equal.
\end{proof}

\begin{remark} \label{re:left-inv}
For $f(t)=\sum_{i=0}^{m}d_it^i\in S[t;\sigma]$ monic,
$t$ is left-invertible if and only if $d_0$ is left-invertible.
One direction is a  simple degree argument (suppose there are $g,h\in S_f$ with $gt=hf+1$, then
compare the constant terms of both sides).
Conversely, if $d_0$ is left-invertible then $t$ is left-invertible (say, $h_0d_0=1$, choose $h=-h_0$ and define
$g(t)=\sum_{i=0}^{m-1}hd_{i+1}t^i$ to get $gt=hf+1$).
Thus if $f$ is irreducible (hence $d_0\not=0$) and $S$ a division ring then
$t$ is always left-invertible and $S_0={\rm Fix}(\sigma)\cap {\rm Comm}(S)$.
\end{remark}

%%%%%%%%%%%%%%%%%%%%%%%%%%%%%%%%%%%%%%%%%%%%%%%%%%%%%%%%%%%%%%%%
\subsection{When $S$ is an integral domain}
%%%%%%%%%%%%%%%%%%%%%%%%%%%%%%%%%%%%%%%%%%%%%%%%%%%%%%%%%%%%%%%%%%

 In this section, let  $S$ be a commutative integral domain  with quotient field $K$, $f$ be monic and $\sigma$ injective
 as before. Then
$\sigma$ and $\delta$ canonical extend to $\sigma$ and $\delta$ to $K$ via
$$\sigma(\frac{a}{b})=\frac{\sigma(a)}{\sigma(b)},$$
$$\delta(\frac{a}{b})=\frac{\delta(a)}{b}-\frac{ \sigma(\frac{a}{b}) \delta(b)}{b}$$
for all $a,b\in S$, $b\not=0$.

\begin{proposition} \label{prop:domain}
 Let $S$ be an integral domain with quotient field $K$,
 $f\in S[t;\sigma,\delta]$ and let $S_f=S[t;\sigma,\delta]/S[t;\sigma,\delta]f$.
\\ (i)  $S_f\otimes K\cong K[t;\sigma,\delta]/K[t;\sigma,\delta]f$
again is a Petit algebra.
\\ (ii)  If $f$ is irreducible in $K[t;\sigma,\delta]$, then $S_f$
has no zero divisors.
\end{proposition}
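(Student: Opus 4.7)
\medskip

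For part (i), the plan is to use the basis freeness established in Theorem \ref{thm:main1}(i). Since $f$ is monic, Proposition \ref{prop:mainr}(i) applies with the leading coefficient $1$ invertible in both $S$ and $K$, so $S_f$ is a free $S$-module with basis $1,t,\dots,t^{m-1}$, and likewise $K[t;\sigma,\delta]/K[t;\sigma,\delta]f$ is a free $K$-module with the same basis. The inclusion $S[t;\sigma,\delta]\hookrightarrow K[t;\sigma,\delta]$ induces a map $S_f\to K[t;\sigma,\delta]/K[t;\sigma,\delta]f$ on the level of canonical remainder representatives. The key observation is that the right division algorithm is completely determined by the leading coefficient $1$ of $f$ and repeated applications of $\sigma$ and $\delta$, so for a polynomial $g\in S[t;\sigma,\delta]$ the remainder $g\,\mathrm{mod}_r f$ computed in $S[t;\sigma,\delta]$ is literally the same polynomial as the remainder computed in $K[t;\sigma,\delta]$, by the uniqueness in Proposition \ref{prop:mainr}. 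Hence the induced $S$-algebra map extends by $K$-bilinearity to a $K$-algebra isomorphism $S_f\otimes_S K\xrightarrow{\sim} K[t;\sigma,\delta]/K[t;\sigma,\delta]f$. Since $K$ is a field, this right-hand side is a Petit algebra in Petit's original sense.

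For part (ii), the plan is to pull zero divisors in $S_f$ up to zero divisors in $S_f\otimes_S K$ and derive a contradiction using Petit's theorem for division rings. Because $S_f$ is $S$-free by Theorem \ref{thm:main1}(i), it is torsion-free over the integral domain $S$, so the canonical map
\[
\iota:S_f\longrightarrow S_f\otimes_S K
\]
is an injective ring homomorphism. Suppose now $g,h\in S_f$ are nonzero with $g\circ h=0$ in $S_f$. Then $\iota(g),\iota(h)$ are nonzero elements of $S_f\otimes_S K$ with $\iota(g)\circ\iota(h)=0$. By part (i), $S_f\otimes_S K\cong K[t;\sigma,\delta]/K[t;\sigma,\delta]f$, which is a Petit algebra over the division ring (in fact field) $K$. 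Since $f$ is assumed irreducible in $K[t;\sigma,\delta]$, Remark \ref{rem:1}(iii) (which records Petit's result for $S$ a division ring) tells us this algebra has no zero divisors, contradicting the existence of $\iota(g)$ and $\iota(h)$.

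The only non-routine step is part (i), and within it the verification that the $S$-valued and $K$-valued remainders coincide. This is where one has to be attentive: one must observe that monicity of $f$ is what keeps the division algorithm entirely inside $S[t;\sigma,\delta]$ without ever inverting elements of $S$, so that after base change to $K$ the division procedure produces the same quotient and remainder. Granted that, the identification of multiplicative structures is automatic and everything else is a direct invocation of earlier results in the paper.
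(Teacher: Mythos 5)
Your proposal is correct and takes essentially the same route as the paper: base change to the quotient field, identify $S_f\otimes_S K$ with the Petit algebra $K[t;\sigma,\delta]/K[t;\sigma,\delta]f$, and pull the absence of zero divisors back along the embedding $S_f\hookrightarrow S_f\otimes_S K$ --- the paper simply cites Petit for (i) and leaves the embedding implicit in (ii), whereas you spell out the remainder-coincidence and torsion-freeness details. One cosmetic caveat: since the multiplication on $K[t;\sigma,\delta]/K[t;\sigma,\delta]f$ is bilinear only over the analogue of $S_0$, not over $K$, your phrase ``extends by $K$-bilinearity to a $K$-algebra isomorphism'' should be read as extending $K$-linearly in the module sense and matching the ring structures via the remainder argument, which is exactly what your computation shows.
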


\begin{proof}
(i):  The isomorphism is clear by \cite[3]{P66}.
\\ (ii): By (i), we have $S_f\otimes K\cong K[t;\sigma,\delta]/K[t;\sigma,\delta]f$.
Since $f(t)$ is irreducible in $K[t;\sigma,\delta]$ and $K$ is field,
$K[t;\sigma,\delta]/K[t;\sigma,\delta]f$ is a Petit algebra such that $R_h$ is bijective and $L_h$ is injective,
for all $0\not=h\in S_f$ \cite[Section 2., (6)]{P66}. This implies that it does not have any zero divisors, and so neither does
 $S_f=S[t;\sigma,\delta]/S[t;\sigma,\delta]f$.
\end{proof}

\begin{example} \label{ex:cyclic}
Nonassociative cyclic division algebras were introduced by Sandler \cite{S} and  studied in \cite{S12}
(to be precise, \cite{S12} looks at their opposite algebras).
We generalize their definition  (see \cite{OS} for the associative set-up):

Let $S/S_0$ be an extension of commutative rings, $\sigma\in {\rm Aut}(S)$  and $G=\langle \sigma\rangle$ a finite cyclic
group of order
$m$ acting on $S$ such that the action is trivial on $S_0$. For any $c\in S$, the \emph{generalized (associative or
nonassociative) cyclic algebra}  $A=(S/S_0,\sigma,c)$ is the $m$-dimensional $S$-module
$A=S \oplus St \oplus St^2 \oplus \dots\oplus St^{m-1}$
 where multiplication is given by the following relations
for all $a,b\in S, 0 \leq i,j, <m$, which then are extended linearly to all elements of $A$:
\[
 (at^i)(bt^j) =
  \begin{cases}
   a\sigma^i(b)   t^{i+j} & \text{if } i+j < m, \\
     a \sigma^i(b)  t^{(i+j)-m}c & \text{if } i+j \geq m,
  \end{cases}
\]
If $\sigma\in {\rm Aut}(S)$, then $(S/S_0,\sigma,c)=S_f$ for $f(t)=t^m-c \in S[t;\sigma]$
and $S_0={\rm Fix}(\sigma)$. If $c \in S \setminus S_0$, the algebra $(S/S_0,\sigma,c)$ has nucleus  $S$  and center $S_0$.

Suppose $S_0$ and $S$ are integral domains with quotient fields $F$ and $K$. Canonically extend
$\sigma$ to an automorphism $\sigma:K\to K$, then
if $m$ is prime, $(S/S_0,\sigma,c)=S_f$ has no zero divisors for any choice of $c \in S \setminus S_0$
(since then $(K/F,\sigma,c)$ always is a nonassociative cyclic division algebra and contains $S_f$).

Generalized associative cyclic algebras are used  in \cite{DO}, generalized nonassociative cyclic algebras
in \cite{Pu15}.
\end{example}

%%%%%%%%%%%%%%%%%%%%%%%%%%%%%%%%%%%%%%%%%%%%%%%%%%%%%%%%%%%%%%%%%%%%%%%%%%%%%%%%%%%%%%%%%%%%%%%%

\section{Pseudolinear maps} \label{sec:maps}

%%%%%%%%%%%%%%%%%%%%%%%%%%%%%%%%%%%%%%%%%%%%%%%%%%%%%%%%%%%%%%%%%%%%%%%%%%%%%%%%%%%%%%%%%%%%%%%

Let $\sigma$ be injective  and $f=\sum_{i=0}^{m}d_it^i\in S[t;\sigma,\delta]$ be  a monic skew polynomial of degree $m>1$.
By Theorem \ref{thm:main1},
$S_f$ is a free left $S$-module with $S$-basis $1,t,\dots,t^{m-1}$. We identify an element $h\in S_f$,
$h(t)=\sum_{i=0}^{m-1}a_it^i$ with the vector $(a_0,\dots,a_{m-1})\in S^m$.

 Right multiplication with $0\not=h\in S_f$ in $S_f$,
$R_h:S_f\longrightarrow S_f,$ $p\mapsto ph$, is an $S$-module endomorphism \cite{P66}.
After expressing  $R_h$ in matrix form
with respect to the $S$-basis $1,t,\dots, t^{m-1}$ of $S_f$, the map
$$\gamma: S_f \to {\rm End}_K(S_f), h\mapsto R_h$$
induces an injective $S$-linear map
$$\gamma: S_f \to {\rm Mat}_m(S), h\mapsto R_h \mapsto Y.$$

Left multiplication $L_h:S_f\longrightarrow S_f,$ $p\mapsto hp$ is an
$S_0$-module endomorphism. If we consider $S_f$ as a right ${\rm Nuc}_r(S_f)$-module then $L_h$ is a
${\rm Nuc}_r(S_f)$-module endomorphism.

For a two-sided  $f$, $\gamma$ is the right regular representation and
$\lambda$ is the left regular representation of the  associative algebra $S_f$.

If $S$ is a commutative ring and $\det(\gamma(h))=\det Y=0$, then $h$ is a right zero divisor in $S_f$.
Moreover, $S_f$ is a division algebra if and only if $\gamma(h)$ is an invertible matrix for every nonzero $h \in S_f$.

\begin{remark}\label{re:I}
(i)  In \cite{FG}, where $S$ is a finite field and
  $f(t)=t^n-a$, $\delta=0$, $\gamma(h)=Y$ is the \emph{circulant matrix}  $M_a^\theta $.
\\ (ii) If $S$ is not commutative, but contains a suitable commutative subring, it is  still
possible to define a matrix representing left or right multiplication
in the $S_0$-algebra $S_f$ where the entries of the matrix lie in a commutative subring of $S$ which strictly contains
$S_0$ and which displays the same behaviour as above. This is a particularity of Petit's algebras, and
not always possible for nonassociative algebras in general. It
 reflects the fact that the left nucleus of $S_f$ always contains $S$ (and thus is
 rather `large') and that also the right nucleus
 may contain $S$ or subalgebras of $S$, depending on the $f$ used in the construction.

 For instance, this is the case (and was used when designing fast-decodable space-time block codes, e.g. in \cite{PS15.3},
 \cite{PS15.4}, \cite{Pu16.1}) when $S$ is a cyclic division
algebra $S=(K/F,\rho,c)$ of degree $n$ and $f(t)=t^m-d\in S[t;\sigma]$, with $\sigma$ suitably chosen.
The $m\times m$ matrix $\gamma(h)=Y$ consequently has its entries in $(K/F,\rho,c)$.
We can then substitute each entry in the matrix, which has the form $\sigma^i(x)$ for some $x\in (K/F,\rho,c)$,
perhaps timed with the scalar $d$, with an $n\times n$ matrix: take the matrix of the
right regular representation of $x$ over $K$ in $(K/F,\rho,c)$, apply $\sigma^i$ to each of its entries
and using scalar multiplication by $d$ if applicable.
We obtain an $mn\times mn$ matrix $X$ with entries in the field $K$, which still represents right multiplication
with an element in $S_f$, but now written with respect to the canonical $K$-basis
$1,\dots,e, t,\dots, et,\dots, e^{n-1}t^{m-1}$ of $S_f$, $1,e,\dots,e^{n-1}$ being the canonical basis of $(K/F,\rho,c)$.
Again $\det X=0$  implies
that $h$ is a  zero divisor in $S_f$, and were $S_f$ is a division algebra if and only if $X$ is invertible
for every non-zero $h\in S_f$.
The interested reader is referred to \cite{SPO12}, \cite{PS15.3}, \cite{PS15.4}, \cite{Pu16.1} for the details which would be
beyond the scope of this paper.
\end{remark}

Let
\[C_f = \left[ \begin{array}{ccccc}
0 & 1 & 0 & \cdots & 0 \\
0 & 0 & 1 & 0 & \cdots \\
\vdots & \vdots & \vdots & \ddots & \vdots \\
0 & 0 & 0 & 0 & 1 \\
-d_{0} & -d_{1} &  & \cdots & -d_{m-1} \end{array} \right] \]
be the \emph{companion matrix} of $f$. Then
$$T_f:S^m\longrightarrow S^m,\quad T_f(a_1,\dots,a_m)=(\sigma(a_1),\dots,\sigma(a_m))C_f+(\delta(a_1),\dots,\delta(a_m))$$
is a $(\sigma,\delta)$-pseudolinear transformation on the left $S$-module $S^m$, i.e. an additive map such that
$$T_f(ah)=\sigma(a)T_f(h)+\delta(a)$$
 for all $a\in S$, $h\in S^m$. $T_f$ is called the \emph{pseudolinear transformation
associated to} $f$ \cite{BL} and we
can translate some results on $T_f$ (e.g., see \cite{L}) to our nonassociative context.
For $h=\sum_{i=0}^{n}a_it^i\in S[t;\sigma,\delta]$ we define
$$h(T_f)=\sum_{i=0}^{n}a_i T_f^i.$$

\begin{theorem}\label{thm:leftmult}
(i)
The pseudolinear transformation $T_f$ is the left multiplication $L_t:S_f\longrightarrow S_f,
h\mapsto th$ with $t$ in $S_f$,
calculated with respect to the basis $1,t,\dots,t^{m-1}$, identifying an element $h=\sum_{i=0}^{m-1}a_it^i$
with the vector $(a_0,\dots,a_{m-1})$:
$$L_t(h)=T_f(h)$$
for all $h\in S_f$.
\\ (ii) We have $L_t^i(h)=L_{t^i}(h)$ for all $h\in S_f$.
\\ (iii) Left multiplication $L_h$ with  $h\in S_f$ is given by
$$L_h=h(T_f)=\sum_{i=0}^{n}a_i T_f^i,$$
or equivalently by
$$L_h=h(L_t)=\sum_{i=0}^{n}a_i L_{t^i},$$
when calculated with respect to the basis $1,t,\dots,t^{m-1}$, identifying an element $h=\sum_{i=0}^{n}a_it^i$
with the vector $(a_1,\dots,a_n)$.
\\ (iv) If $S_f$ has no zero divisors then $T_f$ is irreducible, i.e. $\{0\}$ and $S^m$ are the only
$T_f$-invariant left $S$-submodules of $S^m$.
\end{theorem}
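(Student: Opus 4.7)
I would prove the four parts in sequence, with each building on the previous; part (iv) is the subtlest.

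For (i), expand $L_t(h) = t \cdot h$ for $h = \sum_{i=0}^{m-1} a_i t^i \in S_f$ by applying $t a_i = \sigma(a_i)t + \delta(a_i)$ to each monomial; this gives $\sum_{i=0}^{m-1} \sigma(a_i) t^{i+1} + \sum_{i=0}^{m-1} \delta(a_i) t^i$. The unique degree-$m$ monomial $\sigma(a_{m-1}) t^m$ reduces by $t^m \equiv -\sum_{i=0}^{m-1} d_i t^i \pmod{f}$, which is available because $f$ is monic. Matching the coefficients of $1, t, \ldots, t^{m-1}$ against $(a_0, \ldots, a_{m-1}) C_f + (\delta(a_0), \ldots, \delta(a_{m-1}))$ then yields $L_t(h) = T_f(h)$.

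For (ii), the key observation is that for any $g \in R$, left multiplication by $t$ is compatible with reduction: $L_t(g\,{\rm mod}_r\, f) = tg\,{\rm mod}_r\, f$, since writing $g = qf + r$ with $\deg r < m$ gives $tg = (tq)f + tr$ and $(tq)f \in Rf$. Induction on $i$ in the range $0 \le i \le m-1$ (in which $t^i$ remains the honest monomial in $S_f$), applied with $g = t^i h$, gives $L_t^{i+1}(h) = L_t(t^i h\,{\rm mod}_r\, f) = t^{i+1} h\,{\rm mod}_r\, f = L_{t^{i+1}}(h)$. Part (iii) is then immediate: expand $L_h(p) = (\sum a_i t^i)\, p\,{\rm mod}_r\, f$, pull out each $a_i$ using associativity in $R$, and apply the left $S$-linearity of reduction mod $f$ to obtain $L_h(p) = \sum a_i L_{t^i}(p) = \sum a_i L_t^i(p) = h(T_f)(p)$.

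For (iv), part (iii) already shows that a $T_f$-invariant left $S$-submodule $V \subseteq S^m$ is closed under all $L_h$ with $h \in S_f$, so $V$ is a left ideal of $S_f$. The plan is to pick a nonzero $v \in V$ of minimal degree whose leading coefficient is invertible in $S$ and invoke Proposition \ref{prop:skewcodemain}(i) to conclude $V = S_f \cdot v$ with $v$ a right divisor of $f$; if $V \ne S^m$ then $v$ is a proper right divisor, so $f = uv$ for some $u$ with $0 < \deg u < m$, and then $u \cdot v \equiv f \equiv 0$ in $S_f$ with $u, v \ne 0$ exhibits zero divisors, contradicting the hypothesis. The main obstacle is guaranteeing the existence of such a $v$: Proposition \ref{prop:skewcodemain}(i) requires a polynomial of minimal degree in $V$ with \emph{invertible} leading coefficient, which is automatic when $S$ is a division ring (the setting of the \cite{LS} result being generalized) but can fail over a general unital $S$, so some further restriction appears necessary to make the argument go through in full generality.
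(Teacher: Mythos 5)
Your proofs of (i)--(iii) are correct and in fact supply what the paper leaves implicit: the paper's entire proof of Theorem \ref{thm:leftmult} is a citation of \cite[Theorem 13 (2), (3), (4)]{LS} (stated there for $\delta=0$, $f$ irreducible and $S$ a finite field) together with the assertion that the proofs generalize ``easily and mostly verbatim''. Your coefficient computation for (i), the observation that $L_t$ commutes with reduction ${\rm mod}_r\, f$ for (ii), and the additivity plus left $S$-linearity of the remainder for (iii) are exactly the verifications that generalization requires, and they are sound. One cosmetic slip: in (i) the vector you match against should be $(\sigma(a_0),\dots,\sigma(a_{m-1}))C_f+(\delta(a_0),\dots,\delta(a_{m-1}))$, with $\sigma$ applied to the coefficients, as in the paper's definition of $T_f$; your preceding expansion already has the $\sigma$'s in place, so this is a typo rather than an error.

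For (iv), your hesitation is not excessive caution --- it pinpoints a genuine gap that the paper's proof-by-citation papers over. Over a general unital ring $S$ the statement as printed is false: take $S=\mathbb{Z}$, $\sigma={\rm id}$, $\delta=0$, $f(t)=t^2+1$, so that $S_f\cong\mathbb{Z}[i]$ has no zero divisors, while $T_f(a_0,a_1)=(-a_1,a_0)$ leaves the proper nonzero $\mathbb{Z}$-submodule $2\mathbb{Z}\oplus 2\mathbb{Z}$ (indeed every proper nonzero ideal of $\mathbb{Z}[i]$) invariant. The failure occurs exactly where you flagged it: a nonzero $T_f$-invariant submodule is, by your part (iii), a nonzero left ideal of $S_f$, but it need not contain a minimal-degree element with invertible leading coefficient, so Proposition \ref{prop:skewcodemain} (i) need not apply --- in the example the minimal-degree element $2$ has non-invertible leading coefficient. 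Your argument is complete whenever such an element exists, in particular when $S$ is a division ring: normalize $v$ monic, obtain $V=S_f\circ v$ with $v$ a right divisor of $f$, and either $\deg v=0$ (whence $V=S^m$) or $f=uv$ with $0<\deg u,\deg v<m$ yields $u\circ v=0$, contradicting the absence of zero divisors. In the finite setting of \cite{LS} one can alternatively argue that $R_v$ is injective, hence surjective by finiteness, so $1\in S_f\circ v\subseteq V$; neither route survives for arbitrary $S$. So the paper's claim that the proofs carry over ``mostly verbatim'' is too optimistic for part (iv), and an additional hypothesis of the kind you indicate (e.g.\ $S$ a division ring, or an invertible-leading-coefficient condition on minimal-degree elements of invariant submodules) is genuinely needed.
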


\begin{proof}
This is proved for instance in \cite[Theorem 13 (2),  (3), (4)]{LS} for $\delta=0$, $f$ irreducible and $S$ a finite field.
 The proofs generalize easily and mostly verbatim to our more general setting.
\end{proof}

From Theorem \ref{thm:main1}  (vi) together with Theorem \ref{thm:leftmult} (i) we obtain:

\begin{corollary}
 Let $f(t)=\sum_{i=0}^{m}d_it^i\in S[t;\sigma]$ with $d_0$ invertible.
If $\sigma$ is not surjective then the pseudolinear transformation $T_f$ is not surjective.
  In particular, if $S$ is a division ring, $f$ irreducible and $\sigma$ is not  surjective then $T_f$ is not surjective.

Moreover, if $\sigma$ is bijective then $T_f$ is surjective.
\end{corollary}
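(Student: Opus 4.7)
The plan is to obtain the statement as a direct transcription via contraposition of the two ingredients cited just before it. By Theorem \ref{thm:leftmult}(i), under the identification of $S_f$ with $S^m$ via the basis $1,t,\dots,t^{m-1}$, the pseudolinear transformation $T_f$ equals the left multiplication $L_t$ in $S_f$. Consequently, $T_f$ is surjective on $S^m$ if and only if $L_t$ is surjective on $S_f$. This reduces every claim of the corollary to the corresponding statement about $L_t$.

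With this reduction in hand, I would invoke the two implications of Theorem \ref{thm:main1}(vi), which applies precisely because $f\in S[t;\sigma]$ (so $\delta=0$) with $d_0$ invertible. The first implication there says: $L_t$ surjective implies $\sigma$ surjective. Taking the contrapositive yields: $\sigma$ not surjective implies $L_t$ not surjective, and hence by the identification $T_f$ is not surjective. This is exactly the first sentence of the corollary.

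For the "in particular" clause, the point is that when $S$ is a division ring and $f$ is irreducible, $f$ cannot have $d_0=0$ (else $t$ would be a proper right factor of $f$), so $d_0$ is automatically a non-zero element of a division ring, hence invertible. Thus the hypotheses of the general statement are satisfied and the conclusion follows as above; this also coincides with the case treated in \cite[Section 2., (6)]{P66}. Finally, for the "moreover" part, the second implication of Theorem \ref{thm:main1}(vi) says that if $\sigma$ is bijective then $L_t$ is surjective, which via the identification gives surjectivity of $T_f$.

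There is essentially no obstacle here beyond verifying the matchup of hypotheses; the statement is a clean corollary once the identification $T_f=L_t$ is in place. The only minor care needed is to confirm that $d_0$ is invertible in the division-ring/irreducible situation so that Theorem \ref{thm:main1}(vi) applies, which is the brief observation above.
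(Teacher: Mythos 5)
Your proof is correct and matches the paper's own derivation, which obtains the corollary precisely by combining Theorem \ref{thm:leftmult}(i) (the identification $T_f=L_t$) with the two implications of Theorem \ref{thm:main1}(vi) via contraposition. Your extra remark that $d_0\neq 0$ (hence invertible) when $S$ is a division ring and $f$ is irreducible is a sensible check that the paper leaves implicit.
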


\begin{remark}
(i) From Theorem \ref{thm:leftmult} we obtain \cite[Lemma 2]{BL}, since
$pq=0$ in $S_f$ is equivalent to $L_p(q)=p(T_f)=0$.
Note that
$$T_f^n(ah)=\sum_{i=0}^{n}\Delta_{i,n}(a)T_f^i(h)$$
 for all $a\in S$, $h\in S^m$
\cite{BL}, so  $L_{t^n}$ is usually not $(\sigma,\delta)$-pseudolinear anymore.
\\ (ii) Right multiplication with $h$ in $S_f$ induces the injective $S$-linear map
$$\gamma: S_f \to {\rm Mat}_m(S),\quad h\mapsto R_h \mapsto  Y.$$
 $f$ is two-sided is equivalent to
 $\gamma$ being the right regular representation of $S_f$. In that case, $\gamma$ is an injective ring
 homomorphism. In particular, (1) and (3)  in \cite[Theorem 6.6]{FG} hold in our general setting (i.e., for any choice of $f$)
  if and only if $S_f$ is associative:
  both reflect the fact that then $\gamma:S_f\longrightarrow {\rm Mat}_m(S)$ is the right regular representation of $S_f$.
\\ (iii) Suppose $f=h'g=gh$.
Right multiplication in $S_f$ induces the left $S$-module endomorphisms
$R_{h}$ and $R_g$. We have
$g\in {\rm ker}(R_{h})=\{ u\in R\,|\, {\rm deg}(u)<m \text{ and } uh\in Rf  \}$
and
$h\in {\rm ker}(R_g)=\{ u\in R\,|\,{\rm deg}(u)<m \text{ and } ug\in Rf  \}$, cf. \cite[Lemma 3]{L} or
  \cite[Theorem 6.6]{FG}.
If $f$ is two-sided, ${\rm ker}(R_g)=S_f h$ and ${\rm ker}(R_h)=S_f g$.
  \\ (iv) Suppose $f=h'g=gh$.
Left multiplication in $S_f$ induces the right $S_0$-module endomorphisms
$L_{h'}$ and $L_g$. We have
$g\in {\rm ker}(L_{h'})=\{ u\in R\,|\,{\rm deg}(u)<m \text{ and } hu\in Rf  \}$
and
$h\in {\rm ker}(L_g)=\{ u\in R\,|\, {\rm deg}(u)<m \text{ and }gu\in Rf  \}.$
If $f$ is two-sided, ${\rm ker}(L_{h'})= gS_f$ and ${\rm ker}(L_g)= h'S_f$.

Furthermore, (iii) and (iv) tie in with or generalize (4), (5) in \cite[Theorem 6.6]{FG}.
\end{remark}

%%%%%%%%%%%%%%%%%%%%%%%%%%%%%%%%%%%%%%%%%%%%%%%%%%%%%%%%%%%%%%%%%%%%%%%

\section{Finite nonassociative rings obtained from skew polynomials over finite chain-rings} \label{sec:FCRs}

%%%%%%%%%%%%%%%%%%%%%%%%%%%%%%%%%%%%%%%%%%%%%%%%%%%%%%%%%%%%%%%%%%%%%%%%%

\subsection{Finite Chain Rings (cf. for instance \cite{Mc})}
When $S$ is a finite ring, $S_f$ is a  finite  unital nonassociative ring with $|S|^m$ elements
 and
a finite unital nonassociative algebra  over the finite subring $S_0$ of $S$.
E.g., if $S$ is a finite field and $f$ irreducible, then $S_f$ is a finite unital nonassociative division ring, also
called a \emph{semifield} \cite{LS}.
We will look at the special case where $S$ is a finite chain ring.
Lately, these rings gained substantial momentum in coding theory, see for instance
\cite{BG}, \cite{B},  \cite{BSU08}, \cite{C}, \cite{FN},  \cite{GK},  \cite{KZM}, \cite{LL}.

A finite unital commutative ring $R\not=\{0\}$ is called a \emph{finite chain ring}, if its ideals are linearly
ordered by inclusion.

Every ideal of a finite chain ring is principal and its maximal ideal is unique. In particular, $R$ is a
local ring and the residue  field $K=R/(\gamma)$, where $\gamma$ is
a generator of its maximal ideal $m$, is a finite field.
 The ideals $(\gamma^i)=\gamma^i R$ of $R$ form the proper chain
$$R=(1)\supseteq(\gamma)\supseteq (\gamma^2) \supseteq \dots \supseteq(\gamma^e)=(0).$$
The integer $e$ is called the \emph{nilpotency index} of $R$.
 If $K$ has $q$ elements, then $|R|=q^e$. If $\pi:S\longrightarrow K=R/(\gamma)$,
 $x\mapsto \overline{x}=x \,{\rm mod}\, \gamma$ is the canonical projection,
 a monic polynomial $f\in R[t]$ is called \emph{base irreducible} if $f$ is irreducible in $K$.

Let $R$ and $S$ be two finite  chain rings such that $R\subset S$ and $1_R=1_S$
Then $S$ is an extension of $R$ denoted $S/R$. If $m$ is the maximal ideal of $R$ and $M$ the one of $S$, then
$S/R$ is called \emph{separable} if $mS=M$.
 The \emph{Galois group of} $S/R$ is the group $G$ of all automorphisms of $S$ which are the identity when restricted
 to $R$. A separable extension $S/R$ is called \emph{Galois} if
 $S^G=\{s\in S\,|\,\tau(s)=s \text{ for all }\tau\in G\}=R$. This is equivalent to
$S=R[x]/(f(x))$, where $(f(x))$ is the ideal generated by a monic basic irreducible polynomial $f(x)\in R[x]$
\cite[Theorem XIV.8]{Mc}, \cite[Section 4]{W}.
From now on, a separable extension $S/R$ of finite chain rings is understood to be a separable
Galois extension.

The Galois group $G$ of a separable extension
$S/R$ is isomorphic to the Galois group of the extension $\mathbb{F}_{q^n}/\mathbb{F}_{q}$,
where $\mathbb{F}_{q^n}=S/M$, $\mathbb{F}_{q}=R/m$. $G$ is cyclic with generator $\sigma(a)=a^q$ for a suitable
primitive element $a\in S$, and $\{a,\sigma(a),\dots,\sigma^{n-1}(a)\}$ is a free $R$-basis of $S$.
Since $S$ is also an unramified extension of $R$, $M=Sm=Sp$, and
$$S=(1)\supseteq Sp\supseteq\dots\supseteq Sp^t=(0).$$
The automorphism groups of $S$ are known \cite{A1, A2}.

\begin{example}
(i) The integer residue ring $\mathbb{Z}_{p^e}$ and the ring
$\mathbb{F}_{p^n}[u]/(u^e)$ are finite chain rings of characteristic $p$, the later has
nilpotency index $e$ and residue field $\mathbb{F}_{p^n}$.
\\ (ii)
 A finite unital ring $R$ is called a \emph{Galois ring} if it is commutative,
and its zero-divisors $\Delta(R)$ have the form $pR$ for some prime $p$.
$(p)=Rp$ is the unique maximal ideal of $R$.
Given a prime $p$ and positive integers $e$, $n$, denote by $G(p^e,n)$ the Galois ring of characteristic
$p^s$ and cardinality $p^{en}$ which is unique up to isomorphism.
 Its residue field (also called \emph{top-factor}) $\overline{ G(p^e,n)}=G(p^e,n)/pG(p^e,n)$
 is the finite field $\mathbb{F}_{p^n}$.
\end{example}

\subsection{Skew-polynomials and Petit's algebras over finite chain rings}

Let $S$ be a finite chain ring with residue class field  $K=S/(\gamma)$ and $\sigma\in {\rm Aut}(S)$,
$\delta$ a left $\sigma$-derivation. Consider the skew polynomial ring
$R=S[t;\sigma,\delta]$.
  Whenever $S$ is a finite chain ring, we suppose $\sigma((\gamma))= (\gamma)$
and  $\delta((\gamma))\subset(\gamma)$.
 Then the automorphism $\sigma$ induces an automorphism
$$\overline{\sigma}:K\rightarrow K,\,\, \overline{\sigma}(\overline{x})=\overline{\sigma(x)}$$
with $\sigma=\overline{\sigma} \circ \pi$, and analogously $\delta$ a left $\overline{\sigma}$-derivation
 $\overline{\delta}:K\rightarrow K$.
There is the canonical surjective ring homomorphism
$$\can:S[t;\sigma,\delta]\rightarrow K[t;\overline{\sigma},\overline{\delta}], \,\,g(t)=\sum_{i=0}^{n}a_it^i \mapsto \overline{g}(t)
=\sum_{i=0}^{n}\overline{a_i}t^i.$$
We call $f$ \emph{base irreducible}
 if $\overline{f}$ is irreducible in $K[t;\overline{\sigma},\overline{\delta}]$ and \emph{regular} if $\overline{f}\not=0$.
Obviously, if $\overline{f}$ is irreducible in $K[t;\overline{\sigma},\overline{\delta}]$ then $f$ is
irreducible in $S[t;\sigma,\delta]$.
Since $S_f\cong S_{af}$ for all invertible  $a\in S$, without loss of generality we consider only monic $f$ in this section.
From now on let $f\in R=S[t;\sigma,\delta]$ be monic of degree $m>1$.

\begin{lemma} \label{le:6}
Suppose $S$ is a finite chain ring with cardinality $q^e$.
 Then
 $$S_f=S[t;\sigma,\delta]/S[t;\sigma,\delta]f$$
  is a nonassociative finite ring with $q^{em}$ elements and
 $S_{\overline{f}}=K[t;\overline{\sigma},\overline{\delta}]/K[t;\overline{\sigma},\overline{\delta}]\overline{f}$
has $q^{m}$ elements.

 In particular, if $S=G(p^s,n)$ then
$S_f$ has $p^{snm}$ elements and $S_{\overline{f}}$ has $p^{nm}$ elements.
\end{lemma}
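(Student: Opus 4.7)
The plan is to reduce everything to the free module structure established in Theorem \ref{thm:main1}(i), which states that $S_f$ is a free left $S$-module of rank $m$ with basis $1,t,\dots,t^{m-1}$. Once this is in hand, the lemma becomes a pure cardinality computation, so there is no real obstacle---the only thing to verify is that the hypotheses of Theorem \ref{thm:main1}(i) apply to both $S_f$ and $S_{\overline{f}}$, and that the numerics for the Galois ring case come out as claimed.

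First I would observe that since $f$ is monic of degree $m>1$, its leading coefficient $1$ is invertible, so Theorem \ref{thm:main1}(i) applies directly: every element of $S_f$ has a unique representation $\sum_{i=0}^{m-1}a_it^i$ with $a_i\in S$. Hence $|S_f|=|S|^m$, and since $|S|=q^e$ by hypothesis, $|S_f|=q^{em}$.

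Next I would handle $S_{\overline{f}}$. Here one must check that $\overline{f}\in K[t;\overline{\sigma},\overline{\delta}]$ is still monic of positive degree, so that Theorem \ref{thm:main1}(i) applies on the residue side; this is immediate because the canonical projection $\can\colon S[t;\sigma,\delta]\to K[t;\overline{\sigma},\overline{\delta}]$ (which is a ring homomorphism since $\sigma((\gamma))\subseteq(\gamma)$ and $\delta((\gamma))\subseteq(\gamma)$) sends $1$ to $1$ and preserves the leading term of a monic polynomial. Applying Theorem \ref{thm:main1}(i) to $\overline{f}$ over $K$ then gives $|S_{\overline{f}}|=|K|^m=q^m$.

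For the Galois ring specialisation, I would simply substitute $S=G(p^s,n)$, recalling that its residue field is $\mathbb{F}_{p^n}$ (so $q=p^n$) and its nilpotency index is $e=s$, so $|S|=q^e=p^{ns}$. The general formula then yields $|S_f|=q^{em}=p^{snm}$ and $|S_{\overline{f}}|=q^m=p^{nm}$, as claimed. The main (mild) obstacle is purely bookkeeping---making sure the parameters $q,e$ of the finite chain ring are correctly identified with the parameters $p,s,n$ of the Galois ring---but no nontrivial argument is required beyond the free module structure from Theorem \ref{thm:main1}(i).
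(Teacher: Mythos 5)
Your proposal is correct and follows essentially the same route as the paper: the paper's proof likewise notes that $K$ has $q$ elements when $|S|=q^e$ and then counts using the free left $S$-module basis $1,t,\dots,t^{m-1}$ from Theorem \ref{thm:main1}(i), giving $|S_f|=q^{em}$ and $|S_{\overline{f}}|=q^m$. Your additional checks (that $\overline{f}$ remains monic under the projection $\can$, and the identification $q=p^n$, $e=s$ for $S=G(p^s,n)$) are details the paper leaves implicit, but they do not change the argument.
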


\begin{proof}
 The residue class field $K$ has $q$ elements if $|S|=q^e$.
 Since $S_f$ is a left $S$-module with basis $t^i$, $0\leq i\leq m-1$, it has $q^{em}$ elements, analogously,
 $S_{\overline{f}}$ has $q^{m}$ elements.
\end{proof}

From  Remark \ref{rem:7}, Proposition \ref{prop:nuctest}, \cite[(9)]{P66} and \cite[(7)]{P66} we get (as all polynomials in
$K=K[t;\overline{\sigma}]$ are bounded for
 a finite field $K$, and $K[t;\overline{\sigma},\overline{\delta}]\cong K[t;\sigma']$ for a suitable $\sigma'$):

\begin{corollary} \label{cor:FCRmain1}
Suppose $S$ is a finite chain ring.
\\ (i) $S_f$ is a unital nonassociative algebra with finitely many elements  over the subring
$S_0=\{a\in S\,|\, ah=ha \text{ for all } h\in S_f\}$ of $S$.
\\ (ii)
 $S_{\overline{f}}=K[t;\overline{\sigma},\overline{\delta}]/\overline{f}K[t;\overline{\sigma},\overline{\delta}]$
 is a semifield if and
 only if $f$ is base irreducible, if and only if ${\rm Nuc}_r(S_{\overline{f}})$ has no zero divisors.
 \\ (iii) If $\delta=0$ then ${\rm Fix}(\sigma)\subset S_0$.
\end{corollary}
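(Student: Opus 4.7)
The plan is to prove the three parts separately, treating (i) and (iii) as short bookkeeping and putting the real work into (ii).

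For (i), Lemma \ref{le:6} already gives $|S_f|=q^{em}$, so $S_f$ is finite. The unital nonassociative ring structure is supplied by Definition \ref{def:Petit}, and the observation immediately following that definition records that $S_f$ is a unital algebra over the commutative subring $S_0\subset S$. So (i) requires no further argument.

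For (ii), note first that because $S$ is a finite chain ring the residue field $K=S/(\gamma)$ is a finite field, and every endomorphism of a finite field is an automorphism, so $\overline{\sigma}$ is an automorphism of $K$. I would then reduce to the pure twisted case: over a finite field every $\overline{\sigma}$-derivation is inner, so there is an isomorphism $K[t;\overline{\sigma},\overline{\delta}]\cong K[t';\overline{\sigma}]$ of skew polynomial rings, as noted parenthetically in the paragraph preceding the corollary, carrying $\overline{f}$ to a polynomial $f'$ of the same degree whose irreducibility and eigenring are preserved. In $K[t';\overline{\sigma}]$ every element is bounded, because a suitable power $t'^N$ lies in the center ${\rm Fix}(\overline{\sigma})[t'^N]$. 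Proposition \ref{prop:nuctest} (ii) then applies and yields the equivalence "$\overline{f}$ irreducible $\iff$ ${\rm Nuc}_r(S_{\overline{f}})$ has no non-trivial zero divisors". Simultaneously \cite[(9)]{P66} applies (since $K$ is a division ring and $S_{\overline{f}}$ is finite-dimensional over $K$), giving "$S_{\overline{f}}$ is a division algebra $\iff$ $\overline{f}$ is irreducible"; because $S_{\overline{f}}$ is a finite unital nonassociative algebra, being a division algebra is the same as being a semifield. Combining these equivalences with the definition of base irreducible finishes (ii).

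For (iii), let $a\in{\rm Fix}(\sigma)$. Since $\delta=0$, the skew multiplication rule collapses to $ta=\sigma(a)t=at$, and inductively $t^ia=at^i$ for every $i$. As $S$ is commutative, $a$ commutes with every coefficient of any $h=\sum_{i<m}b_it^i\in S_f$, so $ah=ha$ as ordinary products in $R$; this identity survives reduction modulo $f$ and hence holds in $S_f$. Therefore $a\in S_0$.

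The main obstacle is the reduction in (ii) to the pure twisted case and the boundedness of polynomials in $K[t';\overline{\sigma}]$; both are folklore for skew polynomial rings over finite fields but are not spelled out in the excerpt. I would either invoke Jacobson \cite{J96} outright, or sketch a self-contained argument using additive Hilbert 90 (the vanishing of $H^1(\langle\overline{\sigma}\rangle,K)$) to write any $\overline{\sigma}$-derivation as $x\mapsto cx-\overline{\sigma}(x)c$ for some $c\in K$, together with a direct check that $t'^N c'$ is central in $K[t';\overline{\sigma}]$ for $c'$ in the fixed subfield and $N$ the order of $\overline{\sigma}$.
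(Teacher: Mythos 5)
Your proposal is correct and follows essentially the same route as the paper, whose entire ``proof'' is the citation sentence preceding the corollary: reduce $K[t;\overline{\sigma},\overline{\delta}]$ to a twisted ring $K[t';\overline{\sigma}]$ via the innerness of derivations over finite fields, use boundedness of all polynomials there to apply Proposition \ref{prop:nuctest} (ii), and combine with Petit's result \cite[(9)]{P66} that $S_{\overline{f}}$ is a division algebra (hence, being finite, a semifield) if and only if $\overline{f}$ is irreducible. Your direct computation for (iii) is just an unwound version of Theorem \ref{thm:main2} (ii) together with the commutativity of finite chain rings, and your fleshing out of the folklore facts (untwisting the derivation, centrality of $t'^N$) supplies details the paper asserts parenthetically without proof.
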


From now on we
assume that $\gamma\in {\rm Fix}(\sigma)\cap {\rm Const}(\delta)$.
 Then $\gamma S_f$ is a two-sided ideal in $S_f$.

 The canonical surjective ring homomorphism
$\can:S[t;\sigma,\delta]\rightarrow K[t;\overline{\sigma},\overline{\delta}]$
 induces the surjective homomorphism of nonassociative rings
 $$\Psi:S_f=S[t;\sigma,\delta]/S[t;\sigma,\delta]f \rightarrow
K[t;\overline{\sigma},\overline{\delta}]/K[t;\overline{\sigma},\overline{\delta}]\overline{f},$$
$$g(t) \mapsto \overline{g}(t)$$
which has as kernel the two-sided ideal $\gamma S_f$.

This induces an isomorphism of nonassociative rings:
$$ (1)\quad\quad  S_f/\gamma S_f\cong
K[t;\overline{\sigma},\overline{\delta}]/K[t;\overline{\sigma},\overline{\delta}]\overline{f}= S_{\overline{f}},$$
$$g(t)+ \gamma S_f \mapsto \overline{g}(t).$$

%%%%%%%%%%%%%%%%%%%%%%%%%%%%%%%%%%%%%%%%%%%%%%%%%%%%%%%%%%%%%%%%%%%%%%%%%%%
\subsection{Generalized Galois rings}
%%%%%%%%%%%%%%%%%%%%%%%%%%%%%%%%%%%%%%%%%%%%%%%%%%%%%%%%%%%%%%%%%%%%%%%%%%%%%

 A \emph{generalized Galois ring} (GGR) is a finite nonassociative unital ring
 $A$ such that the set of its (left or right) zero divisors $\Delta(A)$ has the form $pA$ for some prime $p$.
$\Delta(A)$ is a two-sided ideal and the quotient $\overline{A}= A/pA$ is a semifield of characteristic $p$,
called the \emph{top-factor} of $A$. The characteristic of $A$ is $p^s$. There is a canonical epimorphism
$$ A\longrightarrow \overline{A}=A/pA, \quad a \mapsto\bar a=a+pA.$$
A generalized Galois ring $A$ of characteristic $p^s$ is a \emph{lifting} of the semifield $\overline{A}$
\emph{of characteristic} $p^s$
if $\overline{C(A)}=C(A)/pC(A)\cong C(\overline{A})$ (cf. \cite{Cons}).

A finite unital ring $A$ is a GGR if and only if there is a prime $p$ and a positive integer $s$ such that
${\rm char}(A)=p^s$ and $\overline{A}=A/pA$ is a semifield \cite[Theorem 1]{Cons}.

Let  $S=G(p^e,n)$ be a Galois ring and let $f\in R=S[t;\sigma,\delta]$ be monic of degree $m>1$ as before.

Let $A=S_f=S[t;\sigma,\delta]/S[t;\sigma,\delta]f$, then by (1)
there is the canonical isomorphism
$$   A/p A\cong
K[t;\overline{\sigma},\overline{\delta}]/K[t;\overline{\sigma},\overline{\delta}]\overline{f}= S_{\overline{f}}.$$
Thus all  base irreducible such $f\in S[t;\sigma,\delta]$ yield generalized Galois rings $S_f$:

\begin{theorem}
Let $S$ be a Galois ring and let $f(t)\in S[t;\sigma,\delta]$ be base irreducible. Then the finite nonassociative ring
$$S_f=S[t;\sigma,\delta]/S[t;\sigma,\delta]f$$
 is a GGR with $p^{enm}$ elements. If $S_f$  is not associative it is a lifting of its top-factor since
$S_0/pS_0\cong {\rm Fix}(\overline{\sigma})$.
\end{theorem}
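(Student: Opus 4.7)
The plan is to verify the three assertions of the theorem in sequence.

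For cardinality, Lemma \ref{le:6} applied to the Galois ring $S=G(p^e,n)$ (which has $p^{en}$ elements) gives directly $|S_f|=(p^{en})^m=p^{enm}$. For the GGR claim, I would invoke the characterisation from [Cons, Theorem 1] quoted above: it is enough to find a prime $p$ and an integer $s\geq 1$ with ${\rm char}(S_f)=p^s$ and $S_f/pS_f$ a semifield. Since $S=G(p^e,n)$ has characteristic $p^e$ and sits unitally inside $S_f$, one has ${\rm char}(S_f)=p^e$. For the residue, the isomorphism of nonassociative rings (1) of the excerpt, specialised to $\gamma=p$, delivers $S_f/pS_f\cong S_{\overline{f}}$; base-irreducibility of $f$ means $\overline{f}$ is irreducible in $K[t;\overline{\sigma},\overline{\delta}]$, so Corollary \ref{cor:FCRmain1}(ii) identifies $S_{\overline{f}}$ as a semifield. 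Hence $S_f$ is a GGR of characteristic $p^e$ and order $p^{enm}$.

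The lifting claim requires $C(S_f)/pC(S_f)\cong C(\overline{S_f})$ when $S_f$ is not associative. Since the Galois ring $S$ is commutative, testing $ah=ha$ against the basis element $t$, and propagating to the higher $t^i$ by induction, collapses the condition to $\sigma(a)=a$ and $\delta(a)=0$, so $S_0={\rm Fix}(\sigma)\cap{\rm Const}(\delta)$. The identical computation in the top-factor gives $(S_{\overline{f}})_0={\rm Fix}(\overline{\sigma})\cap{\rm Const}(\overline{\delta})$, which specialises to ${\rm Fix}(\overline{\sigma})$ in the twisted-polynomial case emphasised in the theorem. Theorem \ref{thm:main1}(v), applicable over the field $K$, then yields $C(\overline{S_f})\cong C(S_{\overline{f}})=(S_{\overline{f}})_0$. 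To obtain the analogous equality $C(S_f)=S_0$ over the Galois ring I would reinspect the proof of (v): its key inputs are the containment $S\subset{\rm Nuc}(S_f)$ from Theorem \ref{thm:main1}(iii) and the left-invertibility of $t$, the latter supplied by Remark \ref{re:left-inv} since base-irreducibility forces the constant coefficient of $\overline{f}$ to be non-zero, so the constant coefficient of $f$ is a unit in the local ring $S$. Finally, the classical structure of unramified Galois extensions of local rings provides ${\rm Fix}(\sigma)\cong G(p^e,d)$ with reduction isomorphism ${\rm Fix}(\sigma)/p{\rm Fix}(\sigma)\cong{\rm Fix}(\overline{\sigma})$; combining this with the analogous statement for ${\rm Const}(\delta)$ delivers $S_0/pS_0\cong{\rm Fix}(\overline{\sigma})\cong C(\overline{S_f})$, completing the lifting.

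The main obstacle is the extension of Theorem \ref{thm:main1}(v) beyond its division-ring hypothesis to secure $C(S_f)=S_0$ when $S$ is only a Galois ring. I expect Petit's original argument to carry over with only cosmetic changes, since its decisive ingredients — commutativity of $S$, left-invertibility of $t$, and the containment of $S$ in the left and middle nuclei from Theorem \ref{thm:main1}(iii) — are all still in force; the secondary verification that the reduction $S_0\to{\rm Fix}(\overline{\sigma})$ is onto reduces to the standard lifting of fixed subrings across an unramified separable Galois extension of local rings.
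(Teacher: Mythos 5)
Your treatment of the cardinality and GGR assertions is correct and coincides with the paper's: Lemma \ref{le:6} for the count, the isomorphism (1) with $\gamma=p$, Corollary \ref{cor:FCRmain1}(ii) for the semifield top-factor, and \cite[Theorem 1]{Cons}. The lifting part, however, has a genuine gap. You compute $(S_{\overline{f}})_0={\rm Fix}(\overline{\sigma})\cap{\rm Const}(\overline{\delta})$ and then say this ``specialises to ${\rm Fix}(\overline{\sigma})$ in the twisted-polynomial case emphasised in the theorem'' --- but the theorem is not restricted to $\delta=0$; $f\in S[t;\sigma,\delta]$ has an arbitrary $\delta$. The step you are missing is precisely the paper's key observation: every $\overline{\sigma}$-derivation of the finite field $K$ is inner, so after a change of variable $S_{\overline{f}}\cong K[y;\overline{\sigma}]/K[y;\overline{\sigma}]\widetilde{f}$ is a \emph{twisted}-polynomial Petit algebra, to which Theorem \ref{thm:main2}(ii) (with $t$ left-invertible via Remark \ref{re:left-inv}) applies. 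Innerness also shows ${\rm Fix}(\overline{\sigma})\subseteq{\rm Const}(\overline{\delta})$ on the commutative field $K$ (if $\overline{\delta}(x)=cx-\overline{\sigma}(x)c$ and $\overline{\sigma}(x)=x$ then $\overline{\delta}(x)=0$), which is exactly what collapses your intersection to ${\rm Fix}(\overline{\sigma})$; without it, your appeal to ``the analogous statement for ${\rm Const}(\delta)$'' when reducing mod $p$ is unsupported, and indeed at the level of $S$ itself (where $\delta$ need not be inner) no such collapse is available.

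The second gap is your plan to extend Theorem \ref{thm:main1}(v) to get $C(S_f)=S_0$ over the Galois ring ``with only cosmetic changes.'' The decisive ingredient in the proof of (v) is not the containment $S\subseteq{\rm Nuc}_l(S_f)\cap{\rm Nuc}_m(S_f)$ you cite, but the \emph{reverse} inclusions ${\rm Nuc}_l(S_f),{\rm Nuc}_m(S_f)\subseteq S$, which give $C(S_f)={\rm Comm}(S_f)\cap{\rm Nuc}(S_f)={\rm Comm}(S_f)\cap S=S_0$; Theorem \ref{thm:main1}(iii) establishes those equalities only when $S$ is a division ring, and over a Galois ring nothing in your listed ingredients (commutativity of $S$, left-invertibility of $t$, $S\subseteq{\rm Nuc}_l\cap{\rm Nuc}_m$) excludes central elements of degree $\geq 1$. (The easy inclusion $S_0\subseteq C(S_f)$ does hold: for $a\in S_0={\rm Fix}(\sigma)\cap{\rm Const}(\delta)$ one has $fa=af\in Rf$, so $a\in{\rm Nuc}_r(S_f)$ by (iii); it is the opposite inclusion your argument cannot deliver.) The paper sidesteps this entirely: it invokes the division-ring results only for the top-factor over the field $K$, obtaining $C(\overline{A})={\rm Fix}(\overline{\sigma})$ there, and combines this with $S_0/pS_0\cong{\rm Fix}(\overline{\sigma})$ rather than attempting to compute $C(S_f)$ over $S$. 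To repair your proof, replace the attempted extension of (v) by this reduction to the residue level, inserting the inner-derivation step above.
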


\begin{proof}
If $\overline{f}$ is irreducible, then
$S_{\overline{f}}=K[t;\overline{\sigma},\overline{\delta}]/K[t;\overline{\sigma},
\overline{\delta}]\overline{f}$ is a semifield. By  (1), we have
 $S_{\overline{f}}\cong A/pA=\overline{A}$, so that $\overline{A}$ is a semifield.
Thus
$S_f$ is a GGR with $p^{enm}$ elements by Lemma \ref{le:6} and \cite[Theorem 1]{Cons}.

 Every left $\sigma$-derivation of a finite field is inner, so that there are a suitable $y$ and
$\widetilde{f}\in K[y;\overline{\sigma}]$ such that
$S_{\overline{f}}\cong K[y;\overline{\sigma}]/K[y;\overline{\sigma}]\widetilde{f}$.
 The second assertion is now proved using
 the fact that $S_{\widetilde{f}}$ is a semifield over ${\rm Fix}(\overline{\sigma})$ by Theorem \ref{thm:main2} (ii)
 and that $\overline{C(A)}=C(A)/pC(A)\cong C(\overline{A})$ .
\end{proof}

\begin{corollary} \label{cor:18}
Let $S/S_0$ be a Galois extension of Galois rings with Galois group
${\rm Gal}(S/S_0)=\langle\sigma\rangle$  of order $m$ and let $F$ denote the residue field of $S_0$, $ char (F)=p$.
Choose $f(t)=t^m+ph(t)-d\in R=S[t;\sigma]$ with $d\in S\setminus S_0$ invertible and $h(t)\in S[t;\sigma]$ of degree $<m$.
\\ (i) If the elements $1,\overline{d},\dots,\overline{d}^m$ are linearly independent over $F$, then
$S_f$ is a GGR which is a lifting of its top-factor.
\\ (ii) For every prime $m$, $S_f$ is a GGR which is a lifting of its top-factor.
\end{corollary}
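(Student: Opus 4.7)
The plan is to reduce both parts to the preceding theorem by exhibiting $f$ as base irreducible. Reducing $f(t) = t^m + ph(t) - d$ modulo $p$ kills the middle term and yields $\overline{f}(t) = t^m - \overline{d} \in K[t;\overline{\sigma}]$, where $K$ is the residue field of $S$ and $\overline{\sigma}$ is the induced generator of ${\rm Gal}(K/F)$ (of order $m$, since $S/S_0$ is Galois of degree $m$). Since $K$ is a field, Corollary \ref{cor:FCRmain1}(ii) says that $\overline{f}$ is irreducible iff $S_{\overline{f}}$ has no zero divisors, and this quotient is precisely the nonassociative cyclic algebra $(K/F,\overline{\sigma},\overline{d})$ of Example \ref{ex:cyclic}. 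So the task reduces to showing that this algebra is a semifield.

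For part (ii), once $\overline{d} \in K \setminus F$ --- which I take to be the intended reading of $d \in S\setminus S_0$, since otherwise one falls into the associative cyclic regime where irreducibility of $t^m - \overline{d}$ is not automatic --- and $m$ is prime, Example \ref{ex:cyclic} applies verbatim: the extension $K/F$ has no proper intermediate field, so $\overline{d}$ is automatically primitive, and the nonassociative cyclic algebra has no zero divisors. Hence $\overline{f}$ is irreducible, so $f$ is base irreducible, and the preceding theorem delivers both that $S_f$ is a GGR and that it lifts its top-factor via ${\rm Fix}(\overline{\sigma}) = F \cong S_0/pS_0$.

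For part (i), I read the linear independence of $1,\overline{d},\dots,\overline{d}^{m-1}$ over $F$ as saying that $\overline{d}$ generates $K$ over $F$. The strategy is to show that any hypothetical factorization $\overline{f} = g \cdot h$ in $K[t;\overline{\sigma}]$ with $0 < \deg g = k < m$ would force $\overline{d}$ into a proper intermediate subfield of $K/F$, contradicting primitivity. Expanding the product via $ta = \overline{\sigma}(a)t$ and matching with $t^m - \overline{d}$ yields relations among the constant terms of $g$ and $h$ and their $\overline{\sigma}$-iterates; taking a suitable partial norm across the cyclic subgroup $\langle\overline{\sigma}^k\rangle$ should convert these relations into an identity placing $\overline{d}$ in the fixed field of that subgroup, which is a proper intermediate field whenever $0 < k < m$.

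The main obstacle is making this partial-norm step rigorous in the twisted ring $K[t;\overline{\sigma}]$, because the nonassociative regime $\overline{d}\notin F$ prevents a direct appeal to the classical norm criterion for $t^m - c$ with $c\in F$; one must keep careful track of which coefficients of $g$ and $h$ lie in which intermediate subfield of $K$ and use the commutation $ta = \overline{\sigma}(a)t$ consistently throughout the expansion. Once base irreducibility of $f$ is established in both cases, the conclusions that $S_f$ is a GGR and a lifting of its top-factor follow immediately from the preceding theorem.
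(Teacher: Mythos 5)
Your overall skeleton and all of part (ii) essentially coincide with the paper's proof: the paper likewise reduces modulo $p$ to $\overline{f}(t)=t^m-\overline{d}\in K[t;\overline{\sigma}]$, identifies the top factor $S_{\overline{f}}$ with the nonassociative cyclic algebra $(K/F,\overline{\sigma},\overline{d})$, quotes the known division-algebra results for such algebras (Example \ref{ex:cyclic}, i.e.\ Sandler's prime-degree theorem and \cite{S12}) to conclude that the top factor is a semifield, and then obtains the GGR property from \cite[Theorem 1]{Cons} via the isomorphism (1), with the lifting statement coming from ${\rm Fix}(\sigma)={\rm Fix}(\overline{\sigma})$ and $S_0/pS_0\cong{\rm Fix}(\overline{\sigma})$. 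Your two reading glosses are well taken and implicit in the paper's proof: $d\in S\setminus S_0$ must indeed be read as $\overline{d}\in K\setminus F$ (reduction can send $d=d_0+pu$ with $d_0\in S_0$, $u\notin S_0$ into $F$), and the hypothesis in (i) must be read as independence of $1,\overline{d},\dots,\overline{d}^{m-1}$, since $m+1$ elements of the $m$-dimensional $F$-space $K$ can never be independent.

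The genuine gap is part (i). The paper does not prove irreducibility of $t^m-\overline{d}$ by hand: it cites \cite{S12} for the fact that linear independence of the powers of $\overline{d}$ over $F$ forces $(K/F,\overline{\sigma},\overline{d})$ to be a division algebra, which suffices (by Corollary \ref{cor:FCRmain1}(ii) this is equivalent to base irreducibility of $f$, and a finite-dimensional algebra without zero divisors is a semifield). Your substitute --- a direct factorization argument in $K[t;\overline{\sigma}]$ driven by ``partial norms'' across $\langle\overline{\sigma}^k\rangle$ --- is left unfinished at exactly the decisive step, as you yourself concede, and that step is not routine: the coefficient relations from $\overline{f}=gh$ yield norm-type identities along $\overline{\sigma}$-orbits only after a nontrivial elimination. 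The degree-one case illustrates the mechanism ($t-a$ right-divides $t^m-\overline{d}$ iff $\overline{d}=\sigma^{m-1}(a)\cdots\sigma(a)a=N_{K/F}(a)\in F$), and for a pure degree-$k$ factor one similarly lands in ${\rm Fix}(\overline{\sigma}^{\gcd(k,m)})$; but for general factors with nonzero intermediate coefficients, carrying out the elimination is essentially a re-proof of Steele's theorem, not a corollary of bookkeeping. So while the intended contradiction has the right shape (primitivity of $\overline{d}$ excludes it from every proper subfield of $K$ containing $F$, including $F$ itself when $\gcd(k,m)=1$), part (i) in your write-up is asserted rather than proved: either carry out that elimination in full or cite \cite{S12} as the paper does.
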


\begin{proof}
$K/F$ is a Galois extension with Galois group ${\rm Gal}(K/F)=\langle\overline{\sigma}\rangle$ of order $m$.
We have $\overline{f}(t)=t^m-\overline{d}$. With the assumptions in (i) resp. (ii),
$S_{\overline{f}}$ is a nonassociative cyclic division algebra over $F$ \cite{S12} and thus the finite nonassociative ring
$S_f$ is a GGR by \cite[Theorem 1]{Cons}.
 It is straightforward to see  that ${\rm Fix}(\sigma)={\rm Fix}(\overline{\sigma})$ using isomorphism (1) and
 that $S_f$ is a lifting of its top-factor by Theorem \ref{thm:main1}.
\end{proof}

Note that although the top-factor in Corollary \ref{cor:18} is a nonassociative cyclic algebra,
 it is unlikely that the algebra $S_{f}$ is isomorphic to a generalized nonassociative cyclic algebra as defined
  in Example \ref{ex:cyclic} unless $h=0$.

%%%%%%%%%%%%%%%%%%%%%%%%%%%%%%%%%%%%%%%%%%%%%%%%%%%%%%%%%%%%%%%%%%%%%%%%%%%
%
%Linear codes over finite  chain rings
%
%%%%%%%%%%%%%%%%%%%%%%%%%%%%%%%%%%%%%%%%%%%%%%%%%%%%%%%%%%%%%%%%%%%%%%%%%%%

\section{Linear codes} \label{sec:codes}

%%%%%%%%%%%%%%%%%%%%%%%%%%%%%%%%%%%%%%%%%%%%%%%%%%%%%%%%%%%%%%%%%%%%%%%%%%%
\subsection{Cyclic $(f,\sigma,\delta)$-codes}
%%%%%%%%%%%%%%%%%%%%%%%%%%%%%%%%%%%%%%%%%%%%%%%%%%%%%%%%%%%%%%%%%%%%%%%%%%%

A \emph{linear code of length $m$ over $S$} is a submodule of the $S$-module $S^m$.
From now on, let $f\in S[t;\sigma,\delta]$ be a  monic polynomial of degree $m>1$.

A \emph{cyclic $(f,\sigma,\delta)$-code} $\mathcal{C}\subset S^m$ is a subset of $S^m$ consisting of the vectors
$(a_0,\dots,a_{m-1})$ obtained from elements $h=\sum_{i=0}^{m-1}a_it^i$
in a left principal ideal $gS_f=S[t;\sigma,\delta]g/S[t;\sigma,\delta]f$ of $S_f$, with $g$ a monic right divisor of $f$.

 A code $\mathcal{C}$ over $S$ is called
   $\sigma$-\emph{constacyclic} if $\delta=0$ and there is a non-zero $d\in S$ such that
   $$(a_0,\dots,a_{m-1})\in  \mathcal{C}\Rightarrow (\sigma(a_{m-1})d,\sigma(a_0),\dots,\sigma(a_{m-2}))\in  \mathcal{C}.$$
If $d=1$, the code is called \emph{$\sigma$-cyclic}.

\cite[Theorem 1]{BL}, the first three equivalences of
\cite[Theorem 2]{BL} and \cite[Corollary 1]{BL} translate  to our set-up as follows (the first equivalences in
\cite[Theorem 2]{BL} are now trivial):

\begin{theorem}\label{thm:newTheorem3.2}
Let $g=\sum_{i=0}^{r}g_it^i$ be a monic polynomial which is a right divisor of $f$.
\\ (i) The cyclic $(f,\sigma,\delta)$-code $\mathcal{C}\subset S^m$ corresponding to the principal ideal
$g S_f$ is a free left $S$-module of dimension $m-{\rm deg}g$.
\\ (ii) If $(a_0,\dots,a_{m-1})\in \mathcal{C}$ then $L_t(a_0,\dots,a_{m-1})\in \mathcal{C}$.
\\ (iii) The matrix generating $\mathcal{C}$ represents the right multiplication $R_g$ with $g$ in $S_f$,
calculated with respect to the basis $1,t,\dots,t^{m-1}$, identifying elements $h=\sum_{i=0}^{m-1}a_it^i$
with the vectors $(a_0,\dots,a_{m-1})$.
\end{theorem}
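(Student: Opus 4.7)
The plan is to reduce everything to the factorization $f=qg$, with $q$ monic of degree $m-r$ (which exists because $g$ is a monic right divisor of $f$), and then exploit Proposition~\ref{prop:mainr} for right division and Theorem~\ref{thm:leftmult}(i) for the identification of $L_t$ with $T_f$.

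For part (i), I first describe the left ideal generated by $g$ inside $S_f$ as the set $\{hg\,{\rm mod}_r f : h\in R\}$. Since $g$ is monic and $\sigma$ is injective, ${\rm deg}(t^ig)=i+r<m$ for $0\leq i\leq m-r-1$, so each $t^ig$ represents itself in $S_f$ with no reduction, and therefore $g,tg,\dots,t^{m-r-1}g$ lie in $\mathcal{C}$. To see that they span, I apply right division by $q$ from Proposition~\ref{prop:mainr}(i) to write an arbitrary $h\in R$ as $h=h_1q+h_0$ with ${\rm deg}(h_0)<m-r$; then $hg=h_1f+h_0g$, so modulo $Rf$ only $h_0g$ survives, and this is visibly a left $S$-linear combination of $g,tg,\dots,t^{m-r-1}g$. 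For linear independence, I suppose $\sum_{i=0}^{m-r-1}s_it^ig\in Rf=Rqg$, rewrite the left-hand side as $pqg$ for some $p\in R$, and cancel $g$ on the right; this cancellation is justified because $g$ monic together with $\sigma$ injective yields ${\rm deg}(ug)={\rm deg}(u)+{\rm deg}(g)$, so $ug=0$ forces $u=0$. The resulting identity $\sum s_it^i=pq$ compared in degrees forces $p=0$ and then all $s_i=0$.

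For part (ii), I invoke Theorem~\ref{thm:leftmult}(i), which states that under the coefficient-vector identification $h=\sum a_it^i\leftrightarrow(a_0,\dots,a_{m-1})$, the endomorphism $L_t$ on $S_f$ coincides with the pseudolinear map $T_f$ on $S^m$. Since $\mathcal{C}$ corresponds to a left ideal of $S_f$, it is closed under left multiplication by $t$ inside $S_f$, hence closed under $T_f=L_t$ at the level of $S^m$, which is exactly the stated invariance.

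For part (iii), the basis from (i) shows that a generator matrix of $\mathcal{C}$ has as its rows the coefficient vectors of $g,tg,\dots,t^{m-r-1}g$. But these are exactly $R_g(1),R_g(t),\dots,R_g(t^{m-r-1})$ expressed in the basis $1,t,\dots,t^{m-1}$, and no reduction mod $f$ is needed since ${\rm deg}(t^ig)<m$ for $i\leq m-r-1$. Hence the generator matrix is the top $(m-r)\times m$ block of the $m\times m$ matrix representing $R_g$ in the basis $1,t,\dots,t^{m-1}$. The only step I expect to require any real care is the right-cancellation in the linear-independence argument of (i); everything else is bookkeeping once the factorization $f=qg$ and Theorem~\ref{thm:leftmult}(i) are in hand.
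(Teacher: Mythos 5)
Your proof is correct, but it is genuinely more self-contained than the paper's treatment, which offers no standalone proof: there the theorem is presented as a translation of \cite[Theorem 1]{BL}, the first equivalences of \cite[Theorem 2]{BL} and \cite[Corollary 1]{BL} into the language of $S_f$, and (iii) is disposed of with the single remark that the $k$-th row of the generating matrix is $L_{t^k}(g)=L_t^k(g)$, i.e. $t^k\circ g=R_g(t^k)$. You instead rebuild the argument from the factorization $f=qg$ (with $q$ monic of degree $m-r$, as you note, since $f$ and $g$ are monic) and right division by $q$ via Proposition \ref{prop:mainr}(i). Two features of your write-up deserve emphasis. First, your spanning computation is carried out for arbitrary $h\in R$, so it shows $\{hg \ {\rm mod}_r f \,:\, h\in R\}=\{h_0g \,:\, {\rm deg}(h_0)<m-r\}$; this is exactly what legitimizes the step in (ii) that the set of left multiples of $g$ is closed under $L_t$ --- in a nonassociative $S_f$ such closure is not automatic for a set of left multiples, but here $p\circ(hg \ {\rm mod}_r f)=(phg)\ {\rm mod}_r f$ because $phg$ differs from $p\,(hg\ {\rm mod}_r f)$ by an element of $Rf$. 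Second, your right-cancellation of the monic $g$ and the degree comparison against $pq$ use only that the relevant leading coefficients are invertible, which amounts to verifying that the field-based division and degree arguments of \cite{BL} survive over a general unital $S$ --- the whole point of the paper's generalization. The only divergence is cosmetic, in (iii): you identify the generator matrix with the top $(m-r)\times m$ block of the matrix of $R_g$, whereas the paper (in the spirit of the circulant matrices of \cite{FG}) reads the full $m\times m$ matrix of $R_g$, with $k$-th row $L_{t^k}(g)$ for $0\le k\le m-1$, as the matrix generating $\mathcal{C}$; the two readings agree because, by your part (ii), the rows $t^k\circ g$ with $k\ge m-r$ also lie in $\mathcal{C}$ and are merely redundant.
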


Note that (iii) is a straightforward consequence from the fact that the $k$-th row of the matrix
generating $\mathcal{C}$ is given by left multiplication of $g$ with $t^k$
in $S_f$, i.e. by
$$L_{t^k}(g)=L_t^k(g).$$
In particular, when $\delta=0$ and $f(t)=t^m-d$, for any $p\in S_f$, the matrix representing
right multiplication $R_p$ with respect to the basis $1,t,\dots,t^{m-1}$ is the circulant matrix defined in
\cite[Definition 3.1]{FG}, see also Section \ref{sec:maps}.

\begin{theorem}\label{thm:newTheorem3.6}
Let $g=\sum_{i=0}^{r}g_it^i$ be a monic polynomial which is a right divisor of $f$, such that $f=gh=h'g$ for two monic
polynomials $h,h'\in S[t;\sigma,\delta]$. Let $\mathcal{C}$ be the cyclic $(f,\sigma,\delta)$-code corresponding to $g$ and
$c=\sum_{i=0}^{m-1}c_it^i\in S[t;\sigma,\delta]$.
Then the following are equivalent:
\\ (i) $(c_0,\dots,c_{m-1})\in \mathcal{C}$.
\\ (ii) $c(t)h(t)=0$ in $S_f$.
\\ (iii) $L_c(h)=ch=0$, resp. $R_h(c)=hc=0$.
\end{theorem}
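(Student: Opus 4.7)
The plan is to dispose of the three equivalences by first observing that (ii) and (iii) are literally the same condition, and then proving (i) $\Leftrightarrow$ (ii) by a short manipulation inside the skew polynomial ring $R = S[t;\sigma,\delta]$.

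The equivalence (ii) $\Leftrightarrow$ (iii) is immediate from the definitions in Section \ref{sec:maps}: both $L_c(h)$ and $R_h(c)$ are by construction the product $c \circ h = ch \,\,{\rm mod}_r f$ of $c$ and $h$ in $S_f$, so each of the three statements simply asserts $ch = 0$ in $S_f$.

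For (i) $\Rightarrow$ (ii) I would unwind the definition of $\mathcal{C}$: the tuple $(c_0,\dots,c_{m-1})$ lies in $\mathcal{C}$ if and only if $c$, viewed as an element of $R$ of degree less than $m$, belongs to the left ideal $Rg/Rf$ of $S_f$, i.e.\ $c = qg$ in $R$ for some $q \in R$. Under this assumption, $ch = qgh = qf \in Rf$, hence $ch \,\,{\rm mod}_r f = 0$, which is exactly $ch = 0$ in $S_f$.

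The converse (ii) $\Rightarrow$ (i) requires a right-cancellation step. If $ch = 0$ in $S_f$ then $ch = qf = qgh$ in $R$ for some $q \in R$. Since $f$ and $g$ are monic and $\sigma$ is injective, comparing leading coefficients in $f = gh$ forces $h$ to be monic, so its leading coefficient is invertible. The degree formula for products in $S[t;\sigma,\delta]$ already used in the proof of Proposition \ref{prop:mainr} then yields that $(c-qg)h = 0$ implies $c - qg = 0$, so $c \in Rg$ and $(c_0,\dots,c_{m-1}) \in \mathcal{C}$. The only delicate point in the whole argument is justifying this right cancellation by $h$; once the monicity of $h$ is recorded the remainder is just unwinding definitions.
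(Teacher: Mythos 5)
Your argument is correct, but note that the paper does not actually prove Theorem \ref{thm:newTheorem3.6}: the statement is presented as a translation of the first three equivalences of \cite[Theorem 2]{BL} into the language of the algebras $S_f$, with the citation standing in for the argument. Your proposal supplies the direct verification behind that citation, and every step holds up. Reading (ii) and (iii) as the single condition $ch=0$ in $S_f$ is the consistent interpretation: with the paper's conventions $R_h(p)=ph$ and $L_h(p)=hp$ from Section \ref{sec:maps} one has $L_c(h)=R_h(c)=c\circ h$, so the ``$=hc$'' printed in (iii) is a typo that your reading silently corrects. For (i) $\Leftrightarrow$ (ii), two remarks. First, your ``i.e.''\ identifying $(c_0,\dots,c_{m-1})\in\mathcal{C}$ with $c\in Rg$ is slightly more than unwinding the definition: a codeword is a remainder $qg \,\,{\rm mod}_r f$, i.e.\ $c=qg-pf$, and rewriting $pf$ as a left multiple of $g$ uses $Rf\subseteq Rg$, which is where the hypothesis $f=h'g$ enters; alternatively one can avoid $h'$ altogether by arguing modulo $Rf$ throughout ($c\equiv qg$ gives $ch\equiv qgh=qf\equiv 0$, and conversely $ch=qf=qgh$ gives $c=qg$, which already has degree $<m$). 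Second, the cancellation step is sound exactly as you justify it: $h$ is monic by hypothesis (re-deriving this from $f=gh$ is redundant, though valid since $\sigma$ is injective), its leading coefficient is therefore invertible, so ${\rm deg}(uh)={\rm deg}(u)+{\rm deg}(h)$ for $u\neq 0$ and $(c-qg)h=0$ forces $c=qg$. Compared with the paper, your route buys a self-contained proof and makes visible that the three equivalences use only the factorization $f=gh$ with $h$ monic --- the left cofactor $h'$ is never needed in the argument --- which is essentially the content of the remark the paper appends after the theorem (where the printed ``$hg=f$'' should, as your cancellation makes clear, read $gh=f$); what the citation buys instead is brevity and the link to the fuller list of equivalences in \cite[Theorem 2]{BL}.
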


This is already part of \cite[Theorem 2]{BL}
 and generalizes \cite[Proposition 1]{DO}: it shows that sometimes $h$ is a parity check polynomial for $\mathcal{C}$
 also when $f$ is not two-sided. Note that when we only have
  $hg=f$, $h$ monic, and $\mathcal{C}$ is the code generated by $g$ then if $ch=0$ in $S_f$, $c$ is a codeword of $\mathcal{C}$.

\begin{corollary} \label{cor:newCor3.6}
Let $g=\sum_{i=0}^{r}g_it^i$ be a monic polynomial which is a right divisor of $f$, such that $f=gh=h'g$ for two monic
polynomials $h,h'\in S_f$. Let $\mathcal{C}$ be the cyclic $(f,\sigma,\delta)$-code corresponding to $g$.
Then the matrix representing right multiplication $R_h$ with $h$ in $S_f$ with respect to
 the basis $1,t,\dots,t^{m-1}$
is a control matrix of the cyclic $(f,\sigma,\delta)$-code corresponding to $g$.
\end{corollary}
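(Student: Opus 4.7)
The plan is to combine Theorem \ref{thm:newTheorem3.6} with the matrix description of right multiplication introduced in Section \ref{sec:maps}. By that theorem, under the identification $c(t) = \sum_{i=0}^{m-1} c_i t^i \leftrightarrow (c_0,\dots,c_{m-1})$, a vector lies in $\mathcal{C}$ precisely when the corresponding element $c \in S_f$ satisfies $R_h(c) = ch = 0$. Thus $\mathcal{C}$ is exactly the kernel of the left $S$-linear map $R_h : S_f \to S_f$, once transported to $S^m$ via the basis $1, t, \dots, t^{m-1}$.

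To pass to matrices, let $M = \gamma(h) \in \mathrm{Mat}_m(S)$ be the matrix of $R_h$ with respect to this basis, using the same row convention as in Theorem \ref{thm:newTheorem3.2}(iii), so the $k$-th row of $M$ is the coordinate vector of $R_h(t^k) = t^k h$ for $k = 0, \dots, m-1$. Directly from the definition of the matrix of a left $S$-linear map acting on row vectors, one then has
$$(c_0, \dots, c_{m-1}) \, M \;=\; \sum_{k=0}^{m-1} c_k \cdot (\text{coordinate vector of } t^k h) \;=\; \text{coordinate vector of } R_h(c).$$
Hence $R_h(c) = 0$ in $S_f$ if and only if $(c_0,\dots,c_{m-1}) \, M = 0$ in $S^m$, and combining this with the previous paragraph yields
$$(c_0,\dots,c_{m-1}) \in \mathcal{C} \;\Longleftrightarrow\; (c_0,\dots,c_{m-1}) \, M = 0,$$
which exhibits $M$ as a control matrix of $\mathcal{C}$.

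I expect no substantive obstacle here: once Theorem \ref{thm:newTheorem3.6} is in place, the statement reduces to aligning the row/column and side conventions for matrix representations with those already fixed for the generator matrix in Theorem \ref{thm:newTheorem3.2}(iii). The end result is simply that the generator matrix $\gamma(g)$ and the control matrix $\gamma(h)$ are dual with respect to one and the same identification $S_f \cong S^m$.
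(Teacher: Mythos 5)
Your proposal is correct and follows essentially the same route as the paper: the paper's proof likewise exhibits the matrix of $R_h$ as the matrix whose $i$th row is the coordinate vector of $t^{i-1}h=R_h(t^{i-1})$ and combines this with the equivalence of Theorem \ref{thm:newTheorem3.6} (membership in $\mathcal{C}$ iff $ch=0$ in $S_f$) to conclude it is a control matrix. You merely make explicit the row-vector convention and the left $S$-linearity of $R_h$ (guaranteed since $S\subset{\rm Nuc}_l(S_f)$), which the paper leaves implicit.
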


\begin{proof}
 The matrix $H$ with $i$th row the vector representing
$$L_{t^{i-1}}(h)=t^{i-1}h,$$
$1\leq i\leq m$, is the matrix representing right multiplication $R_h(p)=ph$ with $h$ in $S_f$ with respect to
 the basis $1,t,\dots,t^{m-1}$, since $t^{i-1}h=R_h(t^{i-1})$ is the $i$th row.
\end{proof}

 For a linear code $\mathcal{C}$ of length $m$ we denote by $\mathcal{C}(t)$ the set of skew polynomials
 $a(t)=\sum_{i=0}^{m-1}a_it^i\in S_f$ associated to the codewords $(a_0,\dots,a_n)\in \mathcal{C}$.

As a consequence of Proposition \ref{prop:skewcodemain} and Theorem \ref{thm:newTheorem3.2} we obtain a description of
$\sigma$-constacyclic codes in terms of  left ideals of $S_f$, generalizing \cite[Theorem 2.2]{JL}:

\begin{corollary} \label{thm:skewcodemain}
Let $f=t^m-d\in S[t;\sigma]$, $d\in S$ invertible, and $\mathcal{C}$ a linear code over $S$ of length $m$.
\\ (i)  Every  left  ideal of $S_f$ with $f=t^m-d\in S[t;\sigma]$ generated by a monic right divisor $g$ of
$f$ in $S[t;\sigma]$ yields a $\sigma$-constacyclic code
  of length $m$ and dimension $m-{\rm deg} g$.
\\ (ii) If $\mathcal{C}$ is a $\sigma$-constacyclic code
 then the skew polynomials in the set
$\mathcal{C}(t)$ of elements $a(t)$ obtained from $(a_0,\dots,a_{m-1})\in \mathcal{C}$
form a left ideal of $S_f$ with $f=t^m-d\in S[t;\sigma]$.
\end{corollary}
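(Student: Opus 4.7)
The plan is to reduce both parts to Theorem \ref{thm:leftmult}, exploiting the fact that for $f=t^m-d\in S[t;\sigma]$, left multiplication by $t$ in $S_f$ coincides with the $\sigma$-constacyclic shift on $S^m$. Indeed, the companion matrix of $f$ has $d$ in its bottom-left entry (since $-d_0=d$) and $1$'s on the superdiagonal, so Theorem \ref{thm:leftmult}(i) yields
$$L_t(a_0,\dots,a_{m-1})=T_f(a_0,\dots,a_{m-1})=(\sigma(a_{m-1})d,\,\sigma(a_0),\dots,\sigma(a_{m-2}))$$
under the identification $\sum a_i t^i\leftrightarrow(a_0,\dots,a_{m-1})$; equivalently, reducing $ta(t)=\sum\sigma(a_i)t^{i+1}$ modulo $f$ and using $t^m\equiv d$ in $S_f$ gives the same formula.

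For (i), let $g$ be a monic right divisor of $f$. By Proposition \ref{prop:skewcodemain}(i), $g$ generates a principal left ideal $L$ of $S_f$, and by Theorem \ref{thm:newTheorem3.2}(i) the corresponding code $\mathcal{C}\subseteq S^m$ is a free left $S$-module of length $m$ and dimension $m-\deg g$. Since $L$ is a left ideal and $t\in S_f$, we have $L_t(L)\subseteq L$, which by the preceding paragraph is exactly closure of $\mathcal{C}$ under the $\sigma$-constacyclic shift. Hence $\mathcal{C}$ is $\sigma$-constacyclic of the claimed length and dimension.

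For (ii), suppose $\mathcal{C}$ is $\sigma$-constacyclic. Then $\mathcal{C}(t)\subseteq S_f$ is closed under addition (by linearity) and under $L_t$ (by the constacyclic property), hence by Theorem \ref{thm:leftmult}(ii) under each $L_{t^i}=L_t^i$. For any $p=\sum_{i=0}^{m-1}p_it^i\in S_f$, Theorem \ref{thm:leftmult}(iii) gives $L_p=\sum_{i=0}^{m-1}p_iL_{t^i}$; this decomposition is meaningful because $S\subseteq{\rm Nuc}_l(S_f)$ by Theorem \ref{thm:main1}(iii), so each $p_i\in S$ acts as a genuine scalar, and the $S$-linearity of $\mathcal{C}$ then forces $L_p(\mathcal{C}(t))\subseteq\mathcal{C}(t)$ for every $p\in S_f$. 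Thus $\mathcal{C}(t)$ is a left ideal of $S_f$. The main subtlety is the nonassociativity of $S_f$, which forces us to appeal to Theorem \ref{thm:leftmult}(ii),(iii) rather than iterating $L_t$ naively; once those are in hand, both halves are immediate from the identification of $L_t$ with the constacyclic shift.
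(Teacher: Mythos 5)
Your proof is correct and takes essentially the same route as the paper: part (i) is exactly the paper's appeal to Theorem \ref{thm:newTheorem3.2} (with Proposition \ref{prop:skewcodemain}), made explicit via the identification of $L_t$ with the $\sigma$-constacyclic shift, and part (ii) spells out the argument the paper only cites by analogy to the proof of Theorem 1 of \cite{BGU07} --- closure under the shift gives closure under $L_{t^i}=L_t^i$, hence under all $L_p=\sum_i p_i L_{t^i}$ using $S\subseteq {\rm Nuc}_l(S_f)$. Your careful treatment of the nonassociativity via Theorem \ref{thm:leftmult}(ii),(iii) is a sound elaboration of what the paper leaves implicit.
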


\begin{proof}
(i) follows from Theorem \ref{thm:newTheorem3.2}.
\\ (ii) The argument is analogous to the proof of \cite[Theorem 1]{BGU07}.
\end{proof}

For any monic $f\in S[t;\sigma,\delta]$,
 representing the right multiplication $R_g$ in $S_f$  by the matrix $Y$ calculated with
respect to the $S$-basis
$1,t,\dots,t^{m-1}$ gives the injective $S$-linear map
$$\gamma: S_f \to {\rm Mat}_m(S), \quad h\mapsto R_h \mapsto Y.$$
For algebras $S_f$ which are not associative, this is not a regular representation of the algebra. However,
we can prove some weaker results for special choices of $f$:

\begin{lemma}\label{le:semi-multiplicative}
Suppose that $f(t)=t^m-d_0\in S[t;\sigma,\delta]$ or $f(t)=t^2-d_1t-d_0\in S[t;\sigma,\delta]$.
Then the product of the $m\times m$ matrices representing $R_d$, $0\not=d\in S\subset S_f$,
and  $R_g$ for any $0\not=g\in S_f$, is the matrix representing $R_{dg}$, i.e. the matrix
representing the right multiplication with $dg$ in $ S_f$.
\end{lemma}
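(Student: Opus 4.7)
The plan is to recast the matrix identity $Y_dY_g = Y_{dg}$ as the middle-nucleus identity $(pd)g = p(dg)$ holding in $S_f$ for all $p,g \in S_f$, and then to derive the latter from Theorem \ref{thm:main1}(iii) (which gives $S \subseteq {\rm Nuc}_m(S_f)$) or, equivalently, directly from associativity in the ambient ring $R = S[t;\sigma,\delta]$.

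First I would recall the conventions set up in Section \ref{sec:maps}: each $h = \sum_i a_i t^i \in S_f$ is identified with the row vector $(a_0,\dots,a_{m-1})$ relative to the basis $1,t,\dots,t^{m-1}$. Since $S \subseteq {\rm Nuc}_l(S_f)$, the right multiplication $R_h$ is a left $S$-module endomorphism, so its matrix $Y_h$ is well-defined and has the property that $p \cdot Y_h$ is the coordinate vector of $R_h(p) = ph$. Applied in sequence, $p \cdot Y_d \cdot Y_g$ yields the coordinate vector of $(pd)g = R_g(R_d(p))$, while $p \cdot Y_{dg}$ yields that of $p(dg) = R_{dg}(p)$. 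The required matrix identity therefore reduces to the element-wise identity $(pd)g = p(dg)$ in $S_f$, for every $p,g$.

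To finish, I would verify this equality directly. Because $d \in S$ has degree zero in $R$ and $p, g \in S_f$ have degree less than $m$, the products $pd$ and $dg$ computed in $R$ both already have degree less than $m$; consequently they coincide with the products $pd$ and $dg$ in $S_f$ without any reduction by $f$. Since $R$ is associative as an ordinary ring, one has $(pd)g = p(dg)$ already in $R$, and taking right remainders modulo $f$ of both sides produces the same element of $S_f$.

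The main obstacle is nothing more than bookkeeping -- keeping careful track of the row-vector convention and the order in which the factors of the matrix product correspond to the order of composition of $R_d$ and $R_g$ -- and no substantive difficulty arises. I would remark at the end that the argument is independent of the specific shape of $f$: neither the hypothesis $f = t^m - d_0$ nor $f = t^2 - d_1 t - d_0$ is used, and the statement holds for any monic $f \in S[t;\sigma,\delta]$ of degree $m>1$ with invertible leading coefficient, being in essence a direct translation of the inclusion $S \subseteq {\rm Nuc}_m(S_f)$.
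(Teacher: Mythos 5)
Your proof is correct, but it takes a genuinely different route from the paper's. The paper does not argue via the nucleus at all: it declares the proofs to be ``straightforward but tedious calculations'' and delegates them to \cite{CB}, noting that the case $f(t)=t^m-d_0\in S[t;\sigma]$ with $S$ a cyclic Galois extension of a field is computed explicitly in \cite{S13}; that is, the paper's route is a direct entry-by-entry computation of the matrices $\gamma(d)$, $\gamma(g)$, $\gamma(dg)$ for the two stated shapes of $f$. You instead observe that, with the row-vector convention of Section \ref{sec:maps} (legitimate because $S\subseteq{\rm Nuc}_l(S_f)$ makes $R_h$ a left $S$-module endomorphism, its matrix having $i$th row the coordinates of $t^{i-1}\circ h$), the matrix identity $\gamma(d)\gamma(g)=\gamma(dg)$ is exactly the element-wise identity $(p\circ d)\circ g=p\circ (d\circ g)$ for all $p\in S_f$, i.e.\ the statement $d\in{\rm Nuc}_m(S_f)$; and this holds unconditionally, since ${\rm deg}(d)=0$ forces $p\circ d=pd$ and $d\circ g=dg$ without any reduction ${\rm mod}_r\, f$, after which associativity of $R=S[t;\sigma,\delta]$ gives equal remainders. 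This is the same degree argument that underlies Theorem \ref{thm:main1} (iii), so your citation of it is apt. What your approach buys is uniformity and generality: it works for arbitrary monic $f$ of degree $m>1$ (indeed any invertible leading coefficient), arbitrary $\delta$, and noncommutative $S$, and it exposes the hypothesis on the shape of $f$ as superfluous---the lemma is just the matrix avatar of $S\subseteq{\rm Nuc}_m(S_f)$---whereas the paper's computational proof only certifies the two listed shapes. One small point to make explicit: your reduction reads ``the product of the matrices representing $R_d$ and $R_g$'' as $\gamma(d)\gamma(g)$ in that order, representing the composite $R_g\circ R_d$; this is the reading the paper intends (and the one needed for the semi-multiplicativity $M(dg)=M(d)M(g)$ in the subsequent corollary), since with the opposite order one would instead need $(pg)d=p(dg)$, which fails in a general nonassociative $S_f$.
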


The proofs are straightforward but tedious calculations \cite{CB}. The case where
$f(t)=t^m-d_0\in S[t;\sigma]$ and $S$ is a  cyclic Galois extension of degree $m$ over a field $F$ with
$\sigma$ generating its automorphism group is already treated in \cite{S13}, its proof holds analogously when $S$
is a commutative ring with an automorphism $\sigma$ of order $m$.

When $S$ is a commutative unital ring, we define a map $M: S_f \rightarrow S$ by
\[M(h) = \det(\gamma(h))\]
for all $h \in S_f$. Note that this is analogous to the definition of the reduced norm of an associative central
simple algebra.

We recall the following:
Let $A$ be an algebra over a ring $S_0$ and $D$ a subalgebra of $A$, both free of finite rank as $S_0$-modules.
Then a map $M:A\mapsto D$  of degree $n$ is called \emph{left semi-multiplicative} if
$$M(ax)=M(a)M(x) \text{ for all } a\in D, x\in A.$$
Furthermore, a map $M:A\mapsto D$  has {\it degree $n$} over $S_0$
if $M(a v)=a^n M(v)$ for all $a\in S_0$, $v\in A$ and if the map $M : A \times
\dots \times A \mapsto D$ defined by
 $$M(v_1,\dots,v_n)=
  \sum_{1\leq i_1< \dots<i_l\leq n}(-1)^{n-l}M(v_{i_1}+ \dots +v_{i_l})$$
($1\leq l\leq n$)  is an $n$-linear map over $S_0$, i.e., $M : A \times\dots \times A \mapsto D$ ($n$-copies) is an
$S_0$-multilinear map where $M(v_1,\dots, v_n)$ is invariant under all permutations of its variables.

\begin{corollary} Suppose $S$ is a commutative unital ring and both $S$ and the algebra $S_f$ are free of finite rank as $S_0$-module.
For $f(t)=t^m-d_0\in S[t;\sigma,\delta]$ or $f(t)=t^2-d_1t-d_0\in S[t;\sigma,\delta]$, the map
$$M: S_f \rightarrow S,\quad M(h) = \det(\gamma(h)),$$
is left semi-linear of degree $m$.
\end{corollary}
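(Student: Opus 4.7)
The plan is to prove the two required properties separately: that $M$ has degree $m$ over $S_0$ in the sense of the definition just given, and that $M(ah)=M(a)M(h)$ for all $a\in S\subseteq S_f$ and $h\in S_f$. Both reduce to standard facts about determinants over a commutative ring, applied via the map $\gamma$.

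First I would establish the degree condition. Since $\gamma$ is $S$-linear and $S_0\subseteq S$, for any $a\in S_0$ and $v\in S_f$ we have $\gamma(av)=a\,\gamma(v)$ as an $m\times m$ matrix over the commutative ring $S$, hence
$$M(av)=\det(a\,\gamma(v))=a^m\det(\gamma(v))=a^m M(v).$$
For the symmetric multilinearity of the associated polarization, fix an $S_0$-basis of $S$ and thereby an $S_0$-basis of $S_f$; each entry of $\gamma(v)$ is then an $S_0$-linear form in the $S_0$-coordinates of $v$ (with coefficients in $S$), so $M(v)=\det(\gamma(v))$ is a homogeneous polynomial of degree $m$ in those coordinates. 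The inclusion--exclusion formula in the definition is precisely the classical polarization of such a polynomial and produces an $m$-linear symmetric map over $S_0$ valued in $S$, valid in any characteristic since we do not divide by $m!$.

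For the semi-multiplicativity, I would invoke Lemma \ref{le:semi-multiplicative}, which for the two prescribed shapes of $f$ gives $\gamma(ah)=\gamma(a)\gamma(h)$ for all $a\in S\subseteq S_f$ and $h\in S_f$ (the zero case being trivial). Taking determinants and using that the determinant is multiplicative on $\mathrm{Mat}_m(S)$ yields
$$M(ah)=\det\bigl(\gamma(a)\gamma(h)\bigr)=\det(\gamma(a))\det(\gamma(h))=M(a)M(h),$$
so $M$ is left semi-multiplicative with respect to the subalgebra $S\subseteq S_f$. There is no substantive obstacle here: the corollary is essentially a repackaging of Lemma \ref{le:semi-multiplicative}, with the degree statement reducing to the $S$-linearity of $\gamma$ and the homogeneity of $\det$. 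The only caveat is that one must not extend the statement beyond the two polynomial shapes covered by Lemma \ref{le:semi-multiplicative}, since for a general $f$ the failure of sufficient associativity obstructs the factorization $\gamma(ah)=\gamma(a)\gamma(h)$.
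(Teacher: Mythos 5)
Your proposal is correct and takes essentially the same route as the paper, whose entire proof is the observation that the statement is a direct consequence of Lemma \ref{le:semi-multiplicative} (semi-multiplicativity via $\gamma(ah)=\gamma(a)\gamma(h)$ and the multiplicativity of $\det$ over the commutative ring $S$). You additionally spell out the degree-$m$ condition via the $S$-linearity of $\gamma$, homogeneity of $\det$, and division-free polarization, which the paper leaves implicit but which is routine and correctly handled.
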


This is a direct consequence of Lemma \ref{le:semi-multiplicative}.
For properties of left semi-linear maps, especially for those of lower degree, the reader is referred to \cite{S13}, \cite{S13.1}.

\begin{example}
Let $K/F$ be a cyclic Galois extension of degree $m$ with reduced norm $N_{K/F}$ and reduced trace $T_{K/F}$,
${\rm Gal}(K/F)=<\sigma>$ and $f(t)=t^m-d\in K[t;\sigma]$.
Then $M: S_f \rightarrow S$  is a left semi-multiplicative map of degree $m$.
 If $a\in K$ is considered as an element of $S_f$ then $M(a) = N_{K/F}(a)$.
In particular, for $m=3$ and $h=h_0+h_1t+h_2t^2$, we have
$$M(h)=N_{K/F}(h_0)+dN_{K/F}(h_1)+d^2N_{K/F}(h_2)-
dT_{K/F}(h_0h_1h_2)$$
  \cite{S13}.
\end{example}

\begin{remark}
We point out that if $S=(K/F,\varrho,c)$ is a
suitable cyclic division algebra with norm $N_{S/F}$, we can describe the right multiplication with $h$
by an $mn\times mn$ matrix $X(h)$ with entries in $K$ as described in Remark \ref{re:I} (ii), and
define a map
$$M:S_f\longrightarrow S, \quad M(h)=\det(R_h)=\det (X(h))$$
 which is also left-semilinear for
suitable $f(t)=t^m-d$
(cf. \cite[Remark 19]{Pu16.1} where we look at the matrix representing left multiplication instead, since we are
dealing with the opposite algebra
there).
Again the map  $M$
 can be seen as a generalization
 of the norm of an associative central simple algebra and
 $$M(x)=N_{F/S_0}(N_{S/F}(x))$$
 for all $x\in S$ for suitably chosen $S_0$-algebras $S_f$, for details see  \cite{Pu16.1}.
\end{remark}

%%%%%%%%%%%%%%%%%%%%%%%%%%%%%%%%%%%%%%%%%%%%%%%%%%%%%%%%%%%%%%%%%%%%%%%%%%%%%
\subsection{Codes over finite  chain rings}
%%%%%%%%%%%%%%%%%%%%%%%%%%%%%%%%%%%%%%%%%%%%%%%%%%%%%%%%%%%%%%%%%%%%%%%%%%%%%%%

Let $S$ be a finite  chain ring and $\sigma$ an automorphism of $S$.
The $S[t;\sigma]$-module $S[t;\sigma]/S[t;\sigma]f$ is increasingly favored
 for linear code constructions over $S$,
with $f$ a monic polynomial of degree $m$ (usually $f(t)=t^m-d$), cf. for instance
\cite{B}, \cite{BSU08}, \cite{JL}. For code constructions, we generally look at reducible skew polynomials $f$.

We take the setup discussed in  \cite{B}, \cite{BSU08}, \cite{JL},
where the $S[t;\sigma]$-module $S[t;\sigma]/S[t;\sigma]f$ is employed for linear code constructions,
 and discuss on some examples how the results mentioned previously fit into our view of
equipping  $S[t;\sigma]/S[t;\sigma]f$ with a nonassociative algebra structure:

\begin{itemize}
\item  In  \cite[Theorem 2.2]{JL}, it is shown that a code of length $n$ is $\sigma$-constacyclic if and only if the skew polynomial
representation
associated to it is a left ideal in $S_f$, again assuming $S_f$ to be associative, i.e. $f(t)=t^m-d\in S[t;\sigma]$ with
$d\in S$ invertible, to be two-sided, and $S$ to be a finite chain ring.

\item  In \cite[Proposition 2.1]{BSU08}, it is shown that any right divisor $g(t)$ of $f(t)=t^m-d\in S[t;\sigma]$
generates a  principal left ideal
in $S_f$, provided that $f$ is a monic two-sided element and assuming $S$ is a Galois ring.
The codewords associated with the elements in the ideal $Rg$ form a code of length $m$ and dimension $m-{\rm deg} g$.
This also holds in the nonassociative setting, so we can drop the assumption in \cite[Proposition 2.1]{BSU08} that $f$ needs to be a monic central element,
see Corollary \ref{thm:skewcodemain}.

\item In  \cite[Theorem 2]{B} (or similarly in  \cite[3.1]{JL}), it is shown that if a skew-linear code $\mathcal{C}$ is associated with a principal left ideal, then
$\mathcal{C}$ is an $S$-free module if and only if $g$ is a right divisor of $f(t)=t^m-1$, again assuming $S$ to be Galois,
and $f$ two-sided. This is generalized in Proposition \ref{prop:skewcodemain}, resp. Corollary \ref{thm:skewcodemain}.

\item For $f(t)=t^m-d\in \mathbb{F}_{q}[t;\sigma]$, the \emph{$(\sigma,d)$-circulant matrix} $M_d^\sigma$  in \cite{FG} is the matrix
representing $R_g$ in the algebra $ S_f$ calculated with respect to the basis
$1,t,\dots,t^{m-1}$. Therefore \cite[Theorem 3.6]{FG} states that for associative algebras $S_f$,
right multiplication gives the right regular  representation of the algebra, so that the product of the matrix representing
$R_h$,
and the one representing $R_g$, for any $0\not=h\in S_f$, $0\not=g\in S_f$, is the matrix
representing $R_{hg}$ in $ S_f$.
The fact that $\gamma$ is injective and additive is observed in \cite[Remark 3.2 (a)]{FG}.

Lemma \ref{le:semi-multiplicative} and the fact that $\gamma$ is $S$-linear imply \cite[Remark 3.2 (b)]{FG}.

Moreover, the matrix equation in \cite[Theorem 5.6 (1)]{FG} can be read as
follows: if $t^n-a=hg$ and $c=\gamma(a,g)$, then the matrix
representing the right multiplication with the element $g(t)\in R_n$ in the algebra $S_{f}$ where
$f(t)=t^n-a\in \mathbb{F}_{q}[t;\sigma]$,
equals  the transpose of the matrix
representing the right multiplication with an element $g^{\sharp}(t)\in S_{f_1}$ where
$f_1(t)=t^n-c^{-1}\in \mathbb{F}_{q}[t;\sigma]$. This suggests  an isomorphism between
$S_{f_1}=\mathbb{F}_{q}[t;\sigma]/\mathbb{F}_{q}[t;\sigma]f_1$ and the opposite algebra of
$S_{f}=\mathbb{F}_{q}[t;\sigma]/\mathbb{F}_{q}[t;\sigma]f$.
\end{itemize}

\section{Conclusion and further work}

This paper proposes a more general way of looking at cyclic $(f,\sigma,\delta)$-codes using nonassociative algebras, and
 unifies different ways of designing cyclic linear $(f,\sigma,\delta)$-codes
 in a general, nonassociative theory. Connections between the algebras and some fast-decodable space-time block code designs are pointed out along the way.

It is well known that for any $f\in R=S[t;\sigma,\delta]$, $R/Rf$ is an $R$-module with the module structure given by
the multiplication $g(h+Rf)=gh+Rf=r+Rf$ if $r$ is the reminder of $gh$ after right dividing by $f$.
This is exactly the multiplication which makes the additive group $\{g\in R\,|\, {\rm deg} (g)<m \}$
 into a nonassociative algebra when $f$ has an invertible leading coefficient.
 Thus one might argue that the introduction of the nonassociative
 point of view we suggested here seems to make things only more complicated that actually needed and not necessarily
 better.

  The full benefits of this approach for coding theory
might only become visible once more work has been done in this direction.  Using the nonassociative Petit
algebras $S_f$
over number fields allows us for instance
to show how certain cyclic $(f,\sigma,\delta)$-codes over finite rings canonically induce a
$\mathbb{Z}$-lattice in $\mathbb{R}^N$. The observations in
 \cite[Section 5.2, 5.3]{DO} hold analogously for our nonassociative algebras and
 explain the potential of the algebras $S_f$
  for coset coding in space-time block coding, in particular for wiretap coding, cf. \cite{Pu15}.
 Previous results for lattices obtained from
 $\sigma$-constacyclic codes related to associative cyclic algebras  by Ducoat and Oggier \cite{DO}
  are obtained as special cases.

We also canonically obtain coset codes from orders in nonassociative algebras over number fields which are
used for fast-decodable space-time block codes  \cite{Pu16}.
Again, previous results for coset codes related to associative cyclic algebras $S_f$ by Oggier and Sethuraman
 \cite{OS} are obtained as special cases.

%*******************************************************************************************%
%****************************************************************************************%


\begin{thebibliography}{1}

\bibitem{A1} Y. Alkhamees, \emph{The group of automorphisms of finite chain rings},
Arab Gulf Journal of Scientific Research 8 (1990). 17-28.

\bibitem{A2}  Y. Alkhamees, \emph{The determination of the group of automorphisms of a finite chain
ring of characteristic p}. The Quarterly Journal of Math. 42 (1991), 387-391.

\bibitem{BG}  A. Batoul, K. Guenda, T. A. Gulliver,  \emph{On self-dual cyclic codes over finite chain rings}.
  Des. Codes Cryptogr. 70 (3) (2014),  347-358.



\bibitem{B} M. Bhaintwal,  \emph{Skew quasi-cyclic codes over Galois rings}. Des. Codes Cryptogr. 62 (1) (2012),  85–101.


\bibitem{BU14} D. Boucher, F. Ulmer, \emph{Linear codes using skew polynomials with automorphisms and derivations},
 Des. Codes Cryptogr. 70 (3) (2014), 405-431.


\bibitem{BGU07} D. Boucher, W. Geiselmann, F. Ulmer, \emph{Skew-cyclic codes}, AAECC 18 (2007), 370-389.

\bibitem{BSU08} D. Boucher, P. Sol\`e, F. Ulmer, \emph{Skew-constacyclic codes over Galois rings}.
 Adv. Math. Comm. 2 (3) (2008), 273-292.

\bibitem{BL}  M. Boulagouaz, A. Leroy, \emph{$(\sigma,\delta)$-codes}. Adv. Math. Commun. 7 (4) (2013), 463-474.

\bibitem{C} P. M. Cohn,
``Skew fields''.
Theory of general division rings.
Encyclopedia of Mathematics and its Applications, 57. Cambridge University Press, Cambridge, 1995.


\bibitem{CB} C. Brown, PhD Thesis University of Nottingham, in preparation.

 \bibitem{C}   Y. Cao, \emph{On constacyclic codes over finite chain rings.} Finite Fields Appl. 24 (2013), 124-135.

\bibitem{DO.0} J.~Ducoat, F.~Oggier, \emph{ Lattice encoding of cyclic codes from skew polynomial rings}.
Proc. of the 4th International Castle Meeting on Coding Theory and Applications, Palmela, 2014.


 \bibitem{DO} J.~Ducoat, F.~Oggier, \emph{On skew polynomial codes and lattices from quotients of
 cyclic division algebras.} Adv. Math. Comm. 10 (1) 2016, 79-94.


\bibitem{FN}  C. Feng, R. W. Nobrega, F. R. Kschischang, D. Silva,  \emph{Communication over finite-chain-ring
 matrix channels}. IEEE Trans. Inform. Theory 60 (10) (2014),  5899–5917.

 \bibitem{GK}   J.  Gao, Kong, \emph{Qiong 1-generator quasi-cyclic codes over
     $\mathbb{F}_{p^m}+u\mathbb{F}_{p^m}+\dots+u^{s-1}\mathbb{F}_{p^m}$.} J. Franklin Inst. 350 (10) (2013),  3260-3276.


\bibitem{FG} N. Fogarty, H. Gluesing-Luerssen, \emph{A Circulant Approach to Skew-Constacyclic Codes.}
 Finite Fields Appl. 35 (2015), 92–114.

\bibitem{Cons} S. Gonz\'{a}lez,  V. T. Markov, C. Mart\'{i}nez, A. A. Nechaev, I. F. R\'{u}a,
\emph{Nonassociative Galois rings}. (Russian) Diskret. Mat. 14 (4) (2002), 117-132; translation in
Discrete Math. Appl. 12 (6) (2002),  519-606.

\bibitem{Cons2} S. Gonz\'{a}lez,  V. T. Markov, C. Mart\'{i}nesz, A. A. Nechaev, I. F. R\'{u}a,
\emph{Cyclic generalized Galois rings}. Comm. Algebra 33 (12) (2005),  4467-4478.

\bibitem{Cons3} S. Gonz\'{a}lez,  V. T. Markov, C. Mart\'{i}nes, A. A. Nechaev, I. F. R\'{u}a,
  \emph{ On cyclic top-associative generalized Galois rings.}
    Finite fields and applications, 25-39, Lecture Notes in Comput. Sci. 2948, Springer, Berlin, 2004.

\bibitem{Cons4} S. Gonz\'{a}lez,  C. Mart\'{i}nez, I. F. R\'{u}a, V. T. Markov,  A. A. Nechaev,
 \emph{Coordinate sets of generalized Galois rings}. J. Algebra Appl. 3 (1) (2004),  31-48.

\bibitem{G0} M. Giesbrecht, \emph{Factoring in skew-polynomial rings over finite fields.}
 J. Symbolic Comput. 26 (4) (1998),  463-486.

\bibitem{GZ} M. Giesbrecht, Y. Zhang, \emph{Factoring and decomposing Ore polynomials over $\mathbb{F}_q(t)$.}
 Proceedings of the 2003 International Symposium on Symbolic and Algebraic Computation, 127–134, ACM, New York, 2003.

\bibitem{GLN} J. G\`{o}mez-Torrecillas, F. J. Lobillo, G. Navarro,
    \emph{Factoring Ore polynomials over $\mathbb{F}_q(t)$ is difficult.} Online at arXiv:1505.07252[math.RA]

\bibitem{G} J. G\`{o}mez-Torrecillas, \emph{Basic module theory over non-commutative rings with computational aspects
of operator algebras. With an appendix by V. Levandovskyy.} Lecture Notes in Comput. Sci. 8372,
     Algebraic and algorithmic aspects of differential and integral operators,  Springer, Heidelberg (2014) 23-82.


\bibitem{J96} N.~Jacobson,
``Finite-dimensional division algebras over fields.'' Springer Verlag,
Berlin-Heidelberg-New York, 1996.

\bibitem{JL} S. Jitman, S. Ling, P. Udomkavanich, \emph{Skew constacyclic codes over finite chain rings},
 Adv. Math. Commun. 6 (1) (2012), 39-63.

\bibitem{KZM}  B. Kong, X. Zheng, H. Ma, \emph{The depth spectrums of constacyclic
codes over finite chain rings}. Discrete Math. 338 (2) (2015), 256-261.

\bibitem{L} A.~Leroy, \emph{Noncommutative polynomial maps.} J. Algebra Appl. 11 (4) (2012), 16 pp.

\bibitem{LS} M.~Lavrauw, J.~Sheekey, \emph{Semifields from skew polynomial rings}. Adv. Geom. 13 (4) (2013), 583-604.

\bibitem{LL}  X. Liu, H. Liu, \emph{LCD codes over finite chain rings}. Finite Fields Appl. 34 (2015), 1-19.


\bibitem{Mc} B.~McDonald,
``Finite rings with identity''. Pure and Applied Mathematics, vol. 28. Marcel Dekker, Inc., New York, 1974.

\bibitem{OS} F.~Oggier, B.~A.~Sethuraman, \emph{Quotients of orders in cyclic algebras and space-time codes}.
 Adv. Math. Commun. 7 (4) (2013), 441-461.

\bibitem{P66} J.-C. Petit, \emph{Sur certains quasi-corps g\'{e}n\'{e}ralisant un type d'anneau-quotient}.
S\'{e}minaire Dubriel. Alg\`{e}bre et th\'{e}orie des nombres 20 (1966 - 67), 1-18.



\bibitem{Pu16} S. Pumpl\"un, \emph{Quotients of orders in algebras obtained from skew polynomials and possible applications}.
Preprint 2016

\bibitem{Pu15} S. Pumpl\"un, \emph{How to obtain lattices from $(f,\sigma,\delta)$-codes
via a generalization of Construction A}. Online at arXiv:1607.03787 [cs.IT]


\bibitem{Pu15.1} S. Pumpl\"un, \emph{A note on linear codes and nonassociative algebras obtained from skew polynomial rings}.
Online at arXiv:1504.00190[cs.IT]

\bibitem{Pu16.1} S. Pumpl\"un, \emph{Tensor products of nonassociative cyclic algebras.}
 Journal of Algebra  451 (2016), 145-165.


\bibitem{PS15.3} S.~Pumpl\"un, A. Steele,
\emph{The nonassociative algebras used to build fast-decodable space-time block codes}.
Advances in Mathematics of Communications 9 (4) 2015, 449-469.


\bibitem{PS15.4} S.~Pumpl\"un, A. Steele,  \emph{Fast-decodable MIDO codes from nonassociative algebras}.
Int. J. of Information and Coding Theory (IJICOT) 3 (1) 2015, 15-38.


\bibitem{R} L. R\'{o}nyai, \emph{Factoring polynomials over finite fields.} J. Algorithms 9 (3) (1988), 391-400.


\bibitem{S} R. Sandler,  \emph{Autotopism groups of some finite non-associative algebras}.
 American Journal of Mathematics 84 (1962), 239-264.

\bibitem{Sch} R.D. Schafer, ``An Introduction to Nonassociative Algebras.'' Dover Publ., Inc., New York, 1995.

\bibitem{Si} M. F. Singer, \emph{Testing reducibility of linear differential operators: a group-theoretic perspective.}
Appl. Algebra Engrg. Comm. Comput. 7 (2) (1996), 77-104.

\bibitem{S12} A.~Steele, \emph{Nonassociative cyclic algebras}. Israel J. Math. 200 (1) (2014),  361-387.


\bibitem{S13} A. Steele, {\it Some new classes of algebras}. PhD Thesis, University of Nottingham 2013.
http://eprints.nottingham.ac.uk/13934/1/PhdthesisFinal.pdf
\bibitem{S13.1} S.~Pumpl\"un, A.~Steele, \emph{Algebras carrying maps of degree n}. Online at
http://homepage.uibk.ac.at/$\sim$ c70202/jordan/index.html.

\bibitem{SPO12} A.~Steele, S.~Pumpl\"un, F.~Oggier,
\emph{MIDO space-time codes from associative and non-associative cyclic algebras}.
Information Theory Workshop (ITW) 2012 IEEE (2012), 192-196.

\bibitem{W}  E. A. Whelan, \emph{A note on finite local rings}.
Rocky Mountain J. Math. 22 (2) (1992), 757-759.


\end{thebibliography}
\end{document}